\documentclass{amsproc}

\usepackage{amssymb}
\usepackage{fullpage}
\usepackage{bm} 
\usepackage{url}
\usepackage{comment}

\usepackage[colorinlistoftodos]{todonotes}

\usepackage{enumitem}
\usepackage{subcaption}

\usepackage{microtype}

\newcommand{\phinorm}[1]{\|#1\|_{\Phi}}
\newcommand{\norm}[1]{\|#1\|}
\newcommand{\E}{\mathop{\mathbb E}}    

\newcommand{\calH}{\mathcal H}
\newcommand{\calV}{[n]}
\newcommand{\calE}{\mathcal E}

\theoremstyle{plain}                   
\newtheorem{thm}{Theorem}[section]
\newtheorem{lmm}[thm]{Lemma}
\newtheorem{prop}[thm]{Proposition}
\newtheorem{cor}[thm]{Corollary}
\newtheorem{defn}[thm]{Definition}

\theoremstyle{definition}              

\newtheorem{conj}[thm]{Conjecture}

\newtheorem{rmk}[thm]{Remark}

\allowdisplaybreaks
\AtBeginDocument{%
  }

\title{Listing Even Cycles Faster than the Submodular-Width Barrier}

\author{Vasileios Nakos}
\address{National and Kapodistrian University of Athens, Athens, Greece}
\email{vasilisnak@di.uoa.gr}

\author{Hung Q. Ngo}
\address{Relational AI Inc., Berkeley, U.S.A.}
\email{hung.q.ngo@gmail.com}

\author{Andreas Panayi}
\address{National and Kapodistrian University of Athens, Athens, Greece}
\email{andreaspanayi8@gmail.com}

\keywords{cycles, even, color coding, capped walks,}

\begin{document}
\begin{abstract}
A classic result of Alon, Yuster, and Zwick (AYZ, Algorithmica 1997) shows that all
$2k$-cycles in an $m$-edge (directed or undirected) graph can be listed in $\tilde
O(m^{2-1/k} + t)$ time, where $t$ is the output size. This bound is a starting point for the
notion of {\em submodular width} due to Marx (JACM, 2013), and then the PANDA framework by Abo
Khamis, Ngo, and Suciu (PODS, 2017), which generalize the AYZ result to arbitrary
conjunctive queries and input degree constraints. A central open question is whether
combinatorial algorithms can beat the submodular-width barrier.

Bringmann and Gorbachev (STOC 2025) gave lower-bound evidence that submodular width may be
optimal for general conjunctive queries under combinatorial algorithms. The picture changes,
however, for $2k$-cycles on undirected graphs, whose corresponding queries have self-joins
and symmetric input EDBs: recent works have improved on AYZ for even-cycle detection and
listing. Pinning down the complexity of $C_{2k}$-detection and listing is thus a natural
step toward overcoming the submodular-width barrier for queries with self-joins or
symmetric EDBs.

With respect to the detection problem (a Boolean conjunctive query), Dahlgaard,
Knudsen, and St{\"{o}}ckel (STOC 2017) showed that $C_{2k}$-detection can be solved in
$\tilde O(m^{2k/(k+1)})$ time, improving on the AYZ bound. The even-cycle listing problem
(a full conjunctive query) is more challenging. Recent works have made progress
on small cycles. Jin and Xu (STOC 2023) (and independently Abboud, Khoury, Leibowitz, and
Safier, FSTTCS 2023) showed that 4-cycles can be listed in $\tilde O(m^{4/3}+t)$ time, and
Vassilevska~Williams and Westover (ITCS 2025) showed that 6-cycles can be listed in $\tilde
O(m^{8/5}+t)$ time; both results improve the corresponding AYZ bounds of $\tilde O(m^{3/2})$
and $\tilde O(m^{5/3})$, respectively. The general case, however, has remained open since
the original AYZ paper some 30 years ago.

In this paper, building on the works of Dahlgaard, Knudsen, and St{\"{o}}ckel, and of
Vassilevska~Williams and Westover, we prove that $2k$-cycles can be listed in $\tilde
O\left(m^{(2k^2-k+1)/(k^2+1)}+t\right)$ time, improving the AYZ bound for
every $k \geq 3$. The key technical ingredient is an \emph{asymmetric supersaturation}
result for even cycles. Additionally, our algorithms are expressed entirely in terms of join
and project operators over multiple tree-decomposition query plans, making them naturally
amenable to efficient implementation in database systems. This is in contrast to the
breadth-first search (BFS)-based approaches in the prior graph algorithms literature.
\end{abstract}

\maketitle

\section{Introduction}
Conjunctive query (CQ) evaluation is a fundamental problem across many areas, including
algorithmic graph theory, relational database management, constraint satisfaction, and logic
programming~\cite{DBLP:conf/pods/KhamisNR16, DBLP:conf/pods/GottlobGLS16,
DBLP:journals/talg/GroheM14, DBLP:books/aw/AbiteboulHV95,theoretics:13722}. There is a
natural synergy between graph algorithms and query evaluation. In particular, a graph is
simply a binary relation, and subgraph pattern queries can naturally be expressed as CQs;
thus, advances in graph algorithms have often translated directly into progress on query
evaluation, and vice versa.

For example, a special case of conjunctive query evaluation is the problem of finding or
listing a small (hyper)graph pattern inside a large graph. A long line of work on this
problem~\cite{DBLP:conf/stoc/Itai77,MR599482,
MR859293,MR1639767,MR2104047,DBLP:conf/soda/GroheM06,
DBLP:journals/talg/GroheM14,DBLP:journals/siamcomp/AtseriasGM13} led to the {\em
AGM-bound}~\cite{DBLP:journals/siamcomp/AtseriasGM13} for the worst-case output size of
conjunctive queries, and the notion of {\em worst-case optimal join
algorithms}~\cite{DBLP:journals/sigmod/NgoRR13, DBLP:conf/pods/NgoPRR12,
DBLP:conf/icdt/Veldhuizen14, 10.1145/3196959.3196990} for evaluating CQs in time
proportional to the worst-case output size.

Another example that is central to our story is the classic result of Alon, Yuster, and
Zwick~\cite{AYZ97}: given any (directed or undirected) graph
with $m$ edges and a constant $k$, one can detect whether a $k$-cycle exists in time $\widetilde{O}(m^{2-1/\lceil
k/2 \rceil})$, and list all $k$-cycles in time $\widetilde{O}(m^{2-1/\lceil k/2 \rceil} + t)$,
where $t$ is the number of $k$-cycles. Marx~\cite{DBLP:journals/jacm/Marx13} introduced the
notion of \emph{submodular width} of a conjunctive query; the PANDA
framework~\cite{theoretics:13722} later extended this to a general algorithm that evaluates
any conjunctive query $Q$ over a database of size $N$ in time
$\widetilde{O}(N^{\mathsf{subw}(Q)} + t)$, where $\mathsf{subw}(Q)$ is the submodular width of
$Q$ and $t$ is the output size. The submodular width of the $k$-cycle query is $2-1/\lceil k/2
\rceil$, and thus we recover the AYZ result as a special case.

There is some evidence that the submodular width may be the best one can hope for in
general, amongst ``combinatorial algorithms''. In particular, Bringmann and
Gorbachev~\cite{DBLP:conf/stoc/BringmannG25} recently proved that for the class of so-called
\emph{sub-quadratic} graph patterns --- of which cycles are a special case --- the
submodular-width runtime is indeed optimal, assuming the {\sf
MinClique}~\cite{VassilevskaWilliams2018ICM}, {\sf
3SUM}~\cite{DBLP:journals/comgeo/GajentaanO12}, or {\sf
ZeroClique}~\cite{DBLP:conf/stoc/AbboudBDN18} hypotheses.

Before proceeding, it is worth clarifying precisely what we mean by a $k$-cycle query.
Let $E$ be the edge relation of a (directed or undirected) graph, and an integer constant $k\geq 2$.
The $k$-cycle \emph{detection} query is the Boolean conjunctive query
\begin{align}
    C_k() \leftarrow E(v_1,v_2) \wedge E(v_2,v_3) \wedge \cdots \wedge E(v_{k-1},v_k)
    \wedge E(v_k,v_1) \wedge \bigwedge_{i \neq j} v_i \neq v_j,
    \label{eqn:ck:detection}
\end{align}
and the $k$-cycle \emph{listing} query outputs the cycles:
\begin{align}
    C_k(v_1,\ldots,v_k) \leftarrow E(v_1,v_2) \wedge E(v_2,v_3) \wedge \cdots
    \wedge E(v_{k-1},v_k) \wedge E(v_k,v_1) \wedge \bigwedge_{i \neq j} v_i \neq v_j.
    \label{eqn:ck:listing}
\end{align}
These are crucially different from the $k$-\emph{closed-walk} queries, which drop the
distinctness constraints:
\begin{align}
    W_k() &\leftarrow E(v_1,v_2) \wedge E(v_2,v_3) \wedge \cdots \wedge E(v_{k-1},v_k)
    \wedge E(v_k,v_1),
    \label{eqn:wk:detection}
    \\
    W_k(v_1,\ldots,v_k) &\leftarrow E(v_1,v_2) \wedge E(v_2,v_3) \wedge \cdots
    \wedge E(v_{k-1},v_k) \wedge E(v_k,v_1).
    \label{eqn:wk:listing}
\end{align}
The closed-walk queries are standard CQs and fall directly within the scope of the PANDA
framework, whereas the cycle queries require handling the $\neq$ constraints, which are
harder to deal with. In particular, variables in the cycle queries are fully connected into
a clique, which means there is no non-trivial tree decomposition. Applying an algorithm such
as PANDA to the cycle queries amounts to brute-force search over all $k$-tuples of vertices,
which takes $O(n^k)$ time. We will explain in Section~\ref{sec:background} how the color-coding
technique of Alon, Yuster, and Zwick~\cite{color-coding} can be used to handle the $\neq$
constraints via Boolean tensor decomposition~\cite{DBLP:conf/icdt/KhamisNOS19}. In
particular, the submodular width of the $k$-closed-walk queries is $2-1/\lceil k/2 \rceil$;
together with the color-coding technique, this lets us evaluate the $k$-cycle queries in
the time predicted by the submodular width.

A key open question in algorithmic graph theory over the past 30 years has been to beat the
runtime of the AYZ algorithm. Recently, there has been some progress on the $C_{2k}$
detection and listing problems in \emph{undirected} graphs. Before describing this progress,
it is worth clarifying upfront how ``beating the AYZ runtime'' is possible at all in light of
the recent fine-grained lower bound of Bringmann and
Gorbachev~\cite{DBLP:conf/stoc/BringmannG25}, which matches the AYZ exponent. Their bound
applies to the \emph{colored} variant of cycle listing, in which the host graph's vertex set
$V$ is partitioned into $2k$ classes and each of the $2k$ atoms of the cycle query is governed
by its own bipartite relation $E_1, \dots, E_{2k}$ between consecutive classes — these $2k$
relations may be chosen independently and adversarially. Our setting is strictly more
restricted: all $2k$ atoms refer to a single shared edge relation $E$, and $E$ is symmetric
because the host graph is undirected. Both restrictions matter — the shared-relation
property already rules out the adversarial 3SUM-style gadgets
of~\cite{DBLP:conf/stoc/BringmannG25}, and symmetry of $E$ is what enables the
supersaturation tools that our algorithm is built on.

On the detection side, Yuster and Zwick~\cite{DBLP:journals/siamdm/YusterZ97} showed how to
detect a $2k$-cycle in $O(n^2)$-time for all $k$, where $n$ is the number of vertices of the
graph. In the dense regime, this is believed to be the best possible. In the sparse regime,
Dahlgaard, Knudsen, and St{\"{o}}ckel~\cite{DBLP:conf/stoc/DahlgaardKS17} showed how to
detect a $2k$-cycle in an {\em undirected} graph in time $O(m^{2k/(k+1)})$, improving on
the AYZ bound for all $k$. Lincoln and Vyas~\cite{DBLP:conf/innovations/Lincoln020} showed that,
when $k$ is odd, this exponent
is optimal under some standard hypotheses. Dahlgaard et al.
introduced the so-called {\em $\Phi$-norm} of a vector and a matrix, which is then used to
bound the number of \emph{capped $k$-walks} in a $C_{2k}$-free graph. Their technique is a
key building block for our results, and we will explain it in more detail in
Section~\ref{sec:background}.

On the listing side, Jin and Xu~\cite{DBLP:conf/stoc/JinX23} (and independently Abboud,
Khoury, Leibowitz, and Safier~\cite{DBLP:conf/fsttcs/AbboudKLS23}) showed how to list all
$4$-cycles in time $\tilde O(m^{4/3}+t)$, and Vassilevska~Williams and
Westover~\cite{DBLP:conf/innovations/WilliamsW25} devised an algorithm to list all
$6$-cycles in time $\tilde O(m^{8/5}+t)$. Both runtimes improve upon the corresponding AYZ
bounds of $\tilde O(m^{3/2})$ and $\tilde O(m^{5/3})$, respectively. They also proved a
conditional lower bound showing that $\tilde O(m^{4/3-\epsilon} + t)$ time is not possible
for listing $4$-cycles. In the dense-graph regime, an $\tilde O(n^2+t)$-time algorithm is
known only for listing $4$-cycles~\cite{DBLP:conf/stoc/JinX23} and
$6$-cycles~\cite{DBLP:conf/sosa/JinWZ24}. A natural open question is therefore to design an
algorithm that beats the AYZ bound for listing $2k$-cycles for all $k$.

One of the reasons researchers have focused on the even-cycle case is the existence of
a Tur\'{a}n-type theorem~\cite{Diestel2017} and supersaturation
results~\cite{ErdosSimonovits1983} for even cycles.\footnote{No such results exist for odd
cycles: $K_{n/2,n/2}$ has $\Theta(n^2)$ edges but no odd cycle.} In particular, a classic theorem of
Bondy and Simonovits~\cite{bondy1974cycles} states that any graph with $m = \Omega(k
n^{1+1/k})$ edges must contain a $2k$-cycle. Using this result, Dahlgaard et
al.~\cite{DBLP:conf/stoc/DahlgaardKS17} showed that a $C_{2k}$-free graph must have a
``small'' number of \emph{capped $k$-walks}. We will define this notion precisely in
Section~\ref{sec:background}. At a high level, this structural result bounds the bag size of
a tree decomposition used to answer the $C_{2k}$ query, which is what enables algorithms
faster than AYZ.
For the listing problem, however, a Tur\'{a}n-type theorem alone is not sufficient.
What we need is a \emph{quantitative} supersaturation result: a guarantee that a
sufficiently dense graph contains not just one $2k$-cycle, but \emph{many} --- enough
that we can charge the running time of our algorithm to the output size $t$.

Given that the exponent $2k/(k+1)$ is likely to be optimal for $C_{2k}$
detection~\cite{DBLP:conf/innovations/Lincoln020}, it is natural to ask whether the same
exponent works for $C_{2k}$-listing, i.e. whether we can list all $2k$-cycles in time
$\tilde O(m^{2k/(k+1)} + t)$. As shown in Vassilevska~Williams and
Westover~\cite{DBLP:conf/innovations/WilliamsW25} (attributed therein to Ce Jin and Renfei
Zhou), we can list $2k$-cycles in time $\tilde O(m^{2k/(k+1)} + t)$ if the following
supersaturation conjecture holds, which is stated as Conjecture~25
in~\cite{DBLP:conf/innovations/WilliamsW25}.

\begin{conj}
  Let $G = (A\cup B, E)$ be a bipartite graph with $m$ edges.
  If $m = \Omega\bigl(k\bigl(|A| + |B| + (|A||B|)^{(k+1)/2k}\bigr)\bigr)$, then the number $t$ of
  $2k$-cycles in $G$ satisfies
    $t \;\geq\; \Omega\!\left(\frac{m^{2k}}{|A|^k |B|^k}\right).$
    \label{conj:supersaturation}
\end{conj}

For $k=2$ the conjecture can be proved using elementary arguments, as we shall
illustrate in Section~\ref{sec:warm:ups}. For $k=3$, Vassilevska~Williams and
Westover~\cite{DBLP:conf/innovations/WilliamsW25} made some progress towards the conjecture,
thus obtaining an $\widetilde{O}(m^{8/5} + t)$-time algorithm for listing $6$-cycles in an
undirected graph. This was the state of the art in even-cycle listing prior to our work.

{\bf Our contributions.} Building on the work of Dahlgaard et
al.~\cite{DBLP:conf/stoc/DahlgaardKS17} and Vassilevska~Williams and
Westover~\cite{DBLP:conf/innovations/WilliamsW25}, we prove that all $2k$-cycles in an
undirected graph can be listed in time $O\!\left((m^{(2k^2-k+1)/(k^2+1)} + t)\log m\right)$ for all
$k \geq 2$. This is the first improvement to $2k$-cycle listing for all $k \geq 3$ since the
AYZ result 30 years ago. From a database perspective, our result indicates that queries with
self-joins and symmetric input relations can be evaluated more efficiently than what the
submodular width would predict, and that extremal and supersaturation results can be
powerful tools for query evaluation.

The technical contributions are as follows. Using a supersaturation result of Jiang and
Yepremyan~\cite{DBLP:journals/cpc/JiangY20}, we prove a weaker form of
Conjecture~\ref{conj:supersaturation}. We then exploit the capped-$k$-walk accounting
framework of Dahlgaard et al.~\cite{DBLP:conf/stoc/DahlgaardKS17} to bound the $\Phi$-norm
of the adjacency matrix of the graph, which in turn gives a tight bound on the number of
capped $k$-walks. The final algorithm is expressed via Boolean tensor decomposition ---
capturing the color-coding technique of AYZ --- where the query plan is a collection of tree
decompositions whose bag sizes are controlled by the capped-$k$-walk bound. Along the way,
our approach shaves a $\log^{k+1} n$ factor from the layered decomposition approach
of~\cite{DBLP:conf/innovations/WilliamsW25}.

For $k \geq 4$, our runtime strictly improves upon AYZ and is the first new result in this
regime since the original paper. For $k = 3$ (listing $6$-cycles), our exponent ($8/5$)
matches that of Vassilevska~Williams and Westover~\cite{DBLP:conf/innovations/WilliamsW25},
but our algorithm improves upon theirs in several respects: (1) we avoid the $O(\log^{74}
n)$ polylogarithmic overhead present in their algorithm; (2) we require no computer-assisted
case analysis (their proof involved approximately 576 cases); and (3) our query plan
is expressed entirely using join and project operators, without BFS traversal, making it
naturally amenable to implementation in database systems.

A further benefit is that our algorithm and its analysis are simple: the algorithm can
be implemented via query rewriting, making it straightforward to deploy in practice. In
addition to the obvious application of our result to efficiently evaluating the $C_{2k}$
query in relational databases (and graph databases), we would also like to bring to the
attention of database-theory researchers the idea of proving and applying extremal and
supersaturation results to database query evaluation.

{\bf Outline.} The rest of the paper is organized as follows. In
Section~\ref{sec:background}, we introduce the necessary background and notations. In
particular, we explain how color-coding can be used to reduce a cycle-query to a closed-walk
query, and the $\Phi$-norm approach. In the ``warm-up'' Section~\ref{sec:warm:ups}, we
illustrate the supersaturation phenomenon and how it can be used to show that
Conjecture~\ref{conj:supersaturation} implies an algorithm to list $2k$-cycles in time
$\tilde O(m^{2k/(k+1)}+t)$. This result was already shown in Vassilevska~Williams and
Westover~\cite{DBLP:conf/innovations/WilliamsW25}, but our analysis is different, and more
importantly, the algorithm is a database query plan that does not require BFS traversal. The
section briefly presents a folklore result proving the conjecture for $C_4$, which implies
that $4$-cycles can be listed in time $\tilde O(m^{4/3}+t)$. Section~\ref{sec:results}
presents our main result for listing $2k$-cycles for all $k$. Finally, in
Section~\ref{sec:conclusions}, we conclude with some open questions and directions for
future work.

\section{Background}
\label{sec:background}
\subsection{Tree decomposition and multiple query plans}
\label{subsec:td}

A {\em tree decomposition} (TD) of a hypergraph $\calH=(\calV,\calE)$ is a pair $(T, \chi)$,
where $T=(V(T),E(T))$ is a tree and $\chi: V(T) \to 2^{\calV}$ assigns to each node of $T$ a
set of vertices of $\calH$, called a \emph{bag}, subject to two conditions: (i) every
hyperedge $S \in \calE$ is contained in some bag $\chi(b)$; and (ii) for every vertex $v \in
\calV$, the set of nodes $\{b \in V(T) \mid v \in \chi(b)\}$ induces a connected subtree of
$T$.

Each conjunctive query (CQ) $Q(F) \text{ :- } \bigwedge_{S \in \calE} R_S(S)$ is naturally
associated with a hypergraph $\calH = (\calV, \calE)$, where $\calV$ is the set of variables
of $Q$, each hyperedge $S \in \calE$ is associated with an atom $R_S(S)$, and $F \subseteq
\calV$ is the set of free variables. In this paper we are only concerned with Boolean
conjunctive queries (i.e., $F = \emptyset$) and full conjunctive queries (i.e., $F =
\calV$). A tree decomposition of $Q$ is defined as a tree decomposition of its associated
hypergraph $\calH$. We abuse notation and use $S$ to refer to both the hyperedge in $\calE$
and the set of variables in the corresponding relation $R_S$.

There is essentially only one tree decomposition for both the $C_{2k}$-detection and
$C_{2k}$-listing queries~\eqref{eqn:ck:detection} and~\eqref{eqn:ck:listing}, consisting of
a single bag $\{v_1, v_2, \ldots, v_{2k}\}$ that contains all the variables. This is due to
the $\neq$ atoms that connect all the variables together into a clique. On the other hand,
the $W_{2k}$-detection and $W_{2k}$-listing queries~\eqref{eqn:wk:detection}
and~\eqref{eqn:wk:listing} have many non-trivial tree decompositions, as shown in
Figure~\ref{fig:td}. In the figure, we omit the tree decompositions themselves and show only
the structures of the bags. On the left, every triangle drawn is a bag and two bags are
adjacent if they share an edge. On the right, we have a tree decomposition with two bags,
one containing all variables on the top half of the cycle and the other containing all
variables on the bottom half of the cycle.
\begin{figure}[th]
\centering
\begin{tikzpicture}[
    every node/.style={font=\small},
    vertex/.style={circle, draw, fill=white, inner sep=1.2pt},
]

\begin{scope}[shift={(0,0)}]
    \foreach \k in {0,...,9} {
        \pgfmathsetmacro{\ang}{90 - \k*36}
        \coordinate (L\k) at (\ang:2);
    }

    \foreach \k in {0,...,9} {
        \pgfmathtruncatemacro{\next}{mod(\k+1,10)}
        \ifnum\k<4
            \draw (L\k) -- (L\next);
        \else
            \ifnum\k>8
                \draw (L\k) -- (L\next);
            \else
                \draw[dashed] (L\k) -- (L\next);
            \fi
        \fi
    }

    \foreach \k in {2,3,4,5,6,7,8} {
        \draw[dashed] (L0) -- (L\k);
    }

    \node[vertex] at (L0) {};
    \node[vertex] at (L8) {};  

    \node[above]      at (L0) {$v_i$};
    \node[above left] at (L9) {$v_{i-1}$};
    \node[left]       at (L8) {$v_{i-2}$};
    \node[above right]at (L1) {$v_{i+1}$};
    \node[right]      at (L2) {$v_{i+2}$};
\end{scope}

\begin{scope}[shift={(7,0)}]
    \foreach \k in {0,...,9} {
        \pgfmathsetmacro{\ang}{180 - \k*36}
        \coordinate (R\k) at (\ang:2);
    }

    \foreach \k in {0,...,9} {
        \pgfmathtruncatemacro{\next}{mod(\k+1,10)}
        \ifnum\k<2
            \draw (R\k) -- (R\next);
        \else
            \ifnum\k>7
                \draw (R\k) -- (R\next);
            \else
                \draw[dashed] (R\k) -- (R\next);
            \fi
        \fi
    }

    \draw[dashed] (R0) -- (R5);

    \node[vertex] at (R0) {};   
    \node[vertex] at (R1) {};   
    \node[vertex] at (R9) {};   
    \node[vertex] at (R5) {};   

    \node[left]       at (R0) {$v_i$};
    \node[above]      at (R1) {$v_{i+1}$};
    \node[below]      at (R9) {$v_{i-1}$};
    \node[right]      at (R5) {$v_{i+k-1}$};
\end{scope}
\end{tikzpicture}
\caption{Tree decompositions for the closed-walk queries $W_{2k}$}
\label{fig:td}
\end{figure}

A tree decomposition of a query represents a query plan~\cite{DBLP:conf/vldb/Yannakakis81,
DBLP:journals/talg/GroheM14, DBLP:conf/pods/KhamisNR16}. Specifically, given a tree
decomposition $(T, \chi)$ of $W_{2k}$, for each node $b \in V(T)$ we evaluate the
restriction\footnote{We project all input relations that share variables with the bag onto
the variables in the bag, and evaluate the full conjunctive query on these projected
relations.} of $W_{2k}$ to the variables in the bag $\chi(b)$, yielding an intermediate
relation $I_{\chi(b)}$ over the variables $\chi(b)$. The full query result can then be
obtained by joining all the intermediate relations $\{I_{\chi(b)}\}_{b \in V(T)}$ along the
tree $T$, following Yannakakis's algorithm~\cite{DBLP:conf/vldb/Yannakakis81}. The runtime
of this procedure is bounded by the size of the largest intermediate relation. In
particular, if the query is a Boolean CQ, then the runtime is $O(N+\max_{b \in V(T)}
|I_{\chi(b)}|)$, and if the query is a full CQ (i.e., a listing query) then the runtime is
$O(N+\max_{b \in V(T)} |I_{\chi(b)}| + t)$, where $N$ is the input size and $t$ is the
output size.

One of the fundamental ideas behind both the AYZ
algorithm~\cite{AYZ97} and Marx's submodular
width~\cite{DBLP:journals/jacm/Marx13} is that one can use \emph{multiple} query plans for a
single query, by partitioning the data appropriately and handling each partition with a
different tree decomposition.

To illustrate, consider the $W_{2k}$-detection and listing queries. Let $\Delta$ be a
parameter to be fixed later. Partition the vertex set into \emph{heavy} vertices (those of
degree $> \Delta$) and \emph{light} vertices (all others), denoted by $V := H \cup L$. AYZ's
query plan rewrites the $W_{2k}$ query as a disjunction of $2k+1$ sub-queries, each of which
can be answered by a different tree decomposition. In particular, for each $i \in [2k]$, we
have a sub-query $W^{h,i}_{2k}$ that looks for a closed walk of length $2k$ where the vertex
$v_i$ is heavy. Additionally, there is one more sub-query $W^\ell_{2k}$ that looks for a
closed walk of length $2k$ where all vertices are light:
\begin{align}
    W^{h,i}_{2k}(v_1,\dots,v_{2k}) &\leftarrow E(v_1,v_2) \wedge E(v_2,v_3) \wedge \cdots \wedge E(v_{2k-1},v_{2k})
    \wedge E(v_{2k},v_1) \wedge \underbrace{H(v_i)}_{\mathrm{heavy~vertex}}
    \\
    W^\ell_{2k}(v_1,\dots,v_{2k}) &\leftarrow E(v_1,v_2) \wedge E(v_2,v_3) \wedge \cdots
    \wedge E(v_{2k-1},v_{2k}) \wedge E(v_{2k},v_1)
    \wedge \underbrace{\bigwedge_{j \in [2k]} L(v_j)}_{\mathrm{all~light~vertices}}
\end{align}
The first $2k$ tree decompositions, shown on the left of Figure~\ref{fig:td}, handle the
$W^{h,i}_{2k}$ sub-queries. The remaining tree decomposition, shown on the right of
Figure~\ref{fig:td}, handles the all-light case with two bags, one covering the top half and
the other covering the bottom half of the walk.

For the heavy query plans, since there are at most $O(m/\Delta)$ heavy vertices, every
intermediate relation corresponding to a bag of such a plan has size at most
$O(m^2/\Delta)$. For the all-light query plan, each of the two bags is bounded in size by
$O(m \Delta^{k-1})$, since each light vertex has degree bounded by $\Delta$. Choosing
$\Delta = m^{1/k}$ to balance the two bounds gives $O(m^2/\Delta) = O(m \Delta^{k-1}) =
O(m^{2-1/k})$, and AYZ achieves $O(m^{2-1/k})$ time for $W_{2k}$-detection and $O(m^{2-1/k}
+ t)$ time for $W_{2k}$-listing.

\subsection{Color-coding and Boolean tensor decomposition}
\label{subsec:color:coding}

As noted above, the $C_{2k}$ queries have only one
(trivial) tree decomposition --- a single bag containing all variables --- making a direct
application of the query-plan framework prohibitively expensive. The \emph{color-coding}
technique~\cite{color-coding} overcomes this obstacle by effectively reducing the cycle
query to a closed-walk query. This reduction can be done with query rewriting, using Boolean
tensor decomposition~\cite{DBLP:conf/icdt/KhamisNOS19} as follows.

The predicates that distinguish the $C_{2k}$ query from the $W_{2k}$ query are exactly the
distinctness constraints: $\bigwedge_{i \neq j} V(v_i) \wedge V(v_j) \wedge v_i \neq v_j,$
where $V$ denotes the vertex set of the graph. Using color-coding, one can compute in linear
time $r \cdot 2k$ unary relations $C_{\ell,i}$, for $\ell \in [r]$ and $i \in [2k]$, with $r
= 2^{O(k)} \log n$, such that
\[
  \bigwedge_{i \neq j} V(v_i) \wedge V(v_j) \wedge v_i \neq v_j
  \;\equiv\;
  \bigvee_{\ell=1}^{r} \bigwedge_{i \in [2k]} C_{\ell,i}(v_i).
\]
(We do not need to materialize these unary relations, as they are defined by
hash functions.)
Substituting this into the $C_{2k}$ query and distributing the disjunction reduces the
cycle query to $r$ instances of the closed-walk query $W_{2k}$, each augmented with unary
filters $C_{\ell,i}$ on the variables:
\begin{align}
    E(v_1,v_2) \wedge \cdots \wedge E(v_{2k},v_1) \wedge \underbrace{\bigwedge_{i \neq j} v_i \neq v_j}_{\mathrm{distinctness}}
    \equiv\;& E(v_1,v_2) \wedge \cdots \wedge E(v_{2k},v_1) \wedge
        \underbrace{\bigvee_{\ell \in [r]} \bigwedge_{i \in [2k]} C_{\ell,i}(v_i)}_{\text{every $v_i$ has a ``color''}} \\
    \equiv\;& \bigvee_{\ell \in [r]} \Bigl[E(v_1,v_2) \wedge \cdots \wedge E(v_{2k},v_1) \wedge
        \underbrace{\bigwedge_{i \in [2k]} C_{\ell,i}(v_i)}_{\mathrm{filtering~redistribution}}\Bigr].
\end{align}

Each such instance can then be answered using AYZ's multi-TD query plan for closed walks, at
a cost of $O(m^{2-1/k})$ per instance. The total runtime for $C_{2k}$-detection is therefore
$\tilde{O}(m^{2-1/k})$, and for $C_{2k}$-listing it is $\tilde{O}(m^{2-1/k} + t)$,
recovering the AYZ bounds for even cycles. Here, $\tilde{O}$ hides a $2^{O(k)} \log n$
factor.

\begin{rmk}
Here $t$ counts \emph{cycles}, not closed walks, so the listing algorithm is
output-sensitive in the number of cycles. One cannot simply drop the distinctness
constraints, enumerate all closed walks, and filter out the non-cycles: the number of closed
walks can be far larger than the number of cycles, so this naive approach pays for walks
rather than for cycles. A tree, for example, has no cycles but plenty of even-length
closed walks.
\end{rmk}

\subsection{Capped $k$-walks and the $\Phi$-norm}
\label{subsec:phi:norm}

This section recasts the key ideas of Dahlgaard, Knudsen, and
St{\"{o}}ckel~\cite{DBLP:conf/stoc/DahlgaardKS17} in the language of database queries and
tree decompositions. Beyond being more familiar to database-theory readers, this view is
also more directly implementable: it is not obvious how to realize the algorithms
of~\cite{DBLP:conf/stoc/DahlgaardKS17} as typical database query plans.

The AYZ analysis in the previous section uses only a single tree decomposition for the
all-light case (the right side of Figure~\ref{fig:td}). Dahlgaard, Knudsen, and
St{\"{o}}ckel~\cite{DBLP:conf/stoc/DahlgaardKS17} made a sharp observation: because the
graph is undirected, one can break an even cycle into two halves at the vertex $v_i$ of
highest degree on the cycle. This is analogous to the $2k$ heavy TDs on the left of
Figure~\ref{fig:td}, where we anchor at each position $v_i$ for $i \in [2k]$; the same
anchoring strategy can be applied on the right side as well.

Specifically, given a graph $G = (V, E)$, let $V_p$ denote the set of vertices whose degree
lies in the interval $(2^{p-1}, 2^p]$, for $p = 0, 1, \ldots, \lceil \log \Delta \rceil$.
Let $\bar{V}_p := \bigcup_{q \leq p} V_q$ be the set of vertices of degree at most $2^p$. We
proceed as in the AYZ algorithm above, but handle the all-light data partition differently:
we rewrite the all-light query as a disjunction of $2k \cdot \lceil \log \Delta \rceil$
sub-queries:
\begin{align}
    &E(v_1,v_2) \wedge \cdots \wedge E(v_{2k},v_1) \wedge \bigwedge_{i \in [2k]} L(v_i) \\
    \equiv\;& E(v_1,v_2) \wedge \cdots \wedge E(v_{2k},v_1) \wedge \bigwedge_{i \in [2k]} L(v_i)  \wedge
        \underbrace{\bigvee_{\substack{i \in [2k] \\ p \in [\lceil\log\Delta\rceil]}}\Bigl[V_p(v_i) \wedge
        \bigwedge_{j \neq i} \bar{V}_p(v_j)\Bigr]}_{\text{for some $i$, $v_i$ is of highest degree}} \\
    \equiv\;& \bigvee_{\substack{i \in [2k] \\ p \in [\lceil\log\Delta\rceil]}} \Bigl[E(v_1,v_2) \wedge \cdots \wedge E(v_{2k},v_1)
    \wedge \bigwedge_{i \in [2k]} L(v_i)  \wedge
        \underbrace{V_p(v_i) \wedge \bigwedge_{j \neq i} \bar{V}_p(v_j)}_{v_i \text{ in the highest degree bucket}}\Bigr].
\end{align}
We will use the TD on the right of Figure~\ref{fig:td} to handle each of these sub-queries,
where the vertices $v_i$ and $v_{i+k-1}$ (circularly) are anchored in the middle of the
cycle and the two bags are split at $v_i$ and $v_{i+k-1}$. In graph-theoretic terms, each
sub-query looks for cycles that pass through a vertex in $V_p$ with all other vertices in
$\bar{V}_p$. It is clear that the union of the answers to these sub-queries equals the
answer to the original query. The main task is then to bound the size of the intermediate
relations produced by each of these tree decompositions.

For each such ``light'' TD anchored at position $i$ and level $p$, the bag size is bounded
by the number of $k$-walks in $G$ that start from a vertex in $V_p$ and visit only vertices
in $\bar{V}_p$. We call these \emph{capped $k$-walks}, and denote their total count by
$\kappa_k(G)$.

To bound $\kappa_k(G)$, Dahlgaard et al.~\cite{DBLP:conf/stoc/DahlgaardKS17} introduced
a powerful analytical instrument: the \emph{$\Phi$-norm}.

\begin{defn}[$\Phi$-norm]
The \emph{$\Phi$-norm} of a vector $\bm{v}$ is defined as
\begin{align}
  \phinorm{\bm{v}} = \int_0^\infty \sqrt{|\{ i : |v_i| \geq x \}|} \, dx.
\end{align}
The \emph{$\Phi$-norm} of a matrix $\bm{A}$ is defined as the operator norm induced by
the vector $\Phi$-norm:
\begin{align}
  \phinorm{\bm{A}} = \sup_{\bm{v} \neq \bm{0}} \frac{\phinorm{\bm{A}\bm{v}}}{\phinorm{\bm{v}}}.
\end{align}
\end{defn}

Throughout the paper, we write $A \lesssim B$ to mean $A = O(B)$.
The following bound on \(\kappa_k(G)\) via the \(\Phi\)-norm can be inferred from the proof
of Lemma~1.7 of~\cite{DBLP:conf/stoc/DahlgaardKS17}.

\begin{lmm}[\cite{DBLP:conf/stoc/DahlgaardKS17}]
\label{lmm:capped-walks}
Let $\bm{A}_p$ denote the adjacency matrix of the subgraph of $G$ induced on
vertices in $\bar{V}_p$ of degree at most $\Delta$.
Then,
\begin{align}
  \kappa_k(G) \lesssim m \cdot (\max_{p} \, \phinorm{\bm{A}_p})^{k-1}.
\end{align}
\end{lmm}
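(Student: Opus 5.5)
We express the number of capped $k$-walks at each level $p$ as an entry sum of a matrix–vector product, then convert that sum into $\Phi$-norms.

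\textbf{Setup.} Fix a level $p$. A capped $k$-walk at level $p$ is a sequence $u_1,\dots,u_k$ of vertices with the following properties:
\begin{itemize}
\item $u_1 \in V_p$, so $\deg(u_1) \in (2^{p-1},2^p]$;
\item every $u_j$ lies in $\bar V_p$ and, in the light case, has degree at most $\Delta$;
\item consecutive vertices are adjacent.
\end{itemize}
Let $\bm{1}_p$ be the indicator vector of $V_p \cap \bar V_p$. The number of such walks is $\bm{1}^\top \bm{A}_p^{k-1}\bm{1}_p = \|\bm{A}_p^{k-1}\bm{1}_p\|_1$. Summing over the $O(\log \Delta)$ levels gives $\kappa_k(G)$. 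A logarithmic loss here is not acceptable, so the levels must be combined geometrically.

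\textbf{Step 1: $\ell_1$ versus $\Phi$-norm for a single level.} For a nonnegative vector $\bm{w}$ whose entries are supported on vertices of degree at most $2^p$, relate $\|\bm{w}\|_1$ to $\phinorm{\bm{w}}$. Writing $N(x)=|\{i: w_i\ge x\}|$, we have $\|\bm{w}\|_1=\int_0^\infty N(x)\,dx$ and $\phinorm{\bm{w}}=\int_0^\infty \sqrt{N(x)}\,dx$. Since $N(x)\le |\mathrm{supp}(\bm{w})|$, we get $\|\bm{w}\|_1 \le \sqrt{|\mathrm{supp}(\bm{w})|}\,\phinorm{\bm{w}}$.

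This is too lossy by itself. The better route is to take $\bm{w}=\bm{A}_p^{k-1}\bm{1}_p$ and peel off the last step: $\bm{1}^\top \bm{A}_p^{k-1}\bm{1}_p = \bm{d}_p^\top \bm{A}_p^{k-2}\bm{1}_p$, where $\bm{d}_p$ is the capped degree vector, with entries at most $2^p$. For nonnegative vectors $\bm{a},\bm{b}$ there is a layer-cake Cauchy–Schwarz type inequality, $\langle \bm{a},\bm{b}\rangle \lesssim \phinorm{\bm{a}}\,\phinorm{\bm{b}}$. It follows by writing both vectors as integrals of indicators of their level sets and using $|S\cap T|\le \sqrt{|S||T|}$. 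Applying it gives
\[
\bm{1}^\top \bm{A}_p^{k-1}\bm{1}_p \lesssim \phinorm{\bm{d}_p}\,\phinorm{\bm{A}_p}^{k-2}\phinorm{\bm{1}_p}.
\]
We use $\phinorm{\bm{1}_p}=\sqrt{|V_p|}$, together with $\phinorm{\bm{d}_p}\le \phinorm{\bm{A}_p}\sqrt{|V_p|}$ (since $\bm{d}_p = \bm{A}_p\bm{1}$ restricted appropriately), or symmetrically.

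\textbf{Step 2: sum over levels.} This yields a per-level bound of $|V_p|\,\phinorm{\bm{A}_p}^{k-1}$ up to factors of the form $2^p$. Arranged correctly, the bound becomes $\lesssim |V_p|\,2^p\cdot(\max_q \phinorm{\bm{A}_q})^{k-1}$. Since $\sum_p |V_p|2^p \le 2\sum_v \deg(v) = 4m$, summing over $p$ gives $\kappa_k(G)\lesssim m(\max_p\phinorm{\bm{A}_p})^{k-1}$ with no logarithmic loss.

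\textbf{Main obstacle.} The hard step is Step 1: distributing the factors so that exactly one factor $|V_p|2^p$ remains. This requires using the cap of $2^p$ on entries of the starting-level vectors, since $V_p$ vertices have degree about $2^p$. The first thing to try is to split the walk symmetrically around its middle vertex. Then $\langle \bm{A}_p^{a}\bm{1}_p, \bm{A}_p^{b}\bm{1}_p\rangle$ has $a+b=k-1$, and each side is bounded by $\phinorm{\bm{A}_p}^{a}\phinorm{\bm{1}_p}$. This gives $|V_p|\phinorm{\bm{A}_p}^{k-1}$. The remaining factor is reconciled using $|V_p|\le |V_p|2^p\lesssim m_p$, which suffices because $2^p\ge 1$.
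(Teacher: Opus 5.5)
The paper does not reprove this lemma; it defers to the proof of Lemma~1.7 of~\cite{DBLP:conf/stoc/DahlgaardKS17}. Your outline therefore has to stand on its own, and it does not. Your pairing inequality $\langle \bm a,\bm b\rangle\le\phinorm{\bm a}\,\phinorm{\bm b}$ for nonnegative vectors is correct and is the right tool, but three steps built on it fail.

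\emph{First}, $\phinorm{\bm d_p}\le\phinorm{\bm A_p}\sqrt{|V_p|}$ is false. Here $\bm d_p=\bm A_p\bm 1_{\bar V_p}$ is the degree vector of all of $\bar V_p$, so the operator bound only gives $\phinorm{\bm A_p}\sqrt{|\bar V_p|}$. Adding $N$ disjoint edges away from $V_p$ makes $\phinorm{\bm d_p}\ge\sqrt N$, while $\phinorm{\bm A_p}$ grows by at most $1$.

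\emph{Second}, the symmetric split $\langle\bm A_p^a\bm 1_{V_p},\bm A_p^b\bm 1_{V_p}\rangle$ counts walks with \emph{both} endpoints in $V_p$. A capped walk ends anywhere in $\bar V_p$, and that free endpoint is the entire difficulty.

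\emph{Third, and decisively}, your per-level target $|V_p|\,2^p(\max_q\phinorm{\bm A_q})^{k-1}$ is false. Let $v$ have $D$ children, and grow below them a tree of depth $k$ in which every internal vertex other than $v$ has $d=D/4$ children. All other degrees are at most $D/2$, so $v$ is alone in its level. A forest satisfies $|E(X,Y)|<|X|+|Y|$, and running the level-set argument of Lemma~\ref{lmm:conj:supersaturation:implies:Phi:norm} with this edge bound gives $\phinorm{\bm A}=O(\sqrt D)$. Yet there are at least $Dd^{k-1}$ walks with $k$ edges starting at $v$. This exceeds $|V_p|2^p\phinorm{\bm A}^{k-1}=O(D^{(k+1)/2})$ by a factor growing like $D^{(k-1)/2}$. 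Note that in the paper a $k$-walk has $k$ edges, as in $\bm 1^\top\bm A^k\bm 1$ and the half-cycle bags, whereas your $u_1,\dots,u_k$ has only $k-1$; even under your convention the example breaks your target once $k\ge 4$. Walks leaving a high level fan out into lower levels, so they must be charged to \emph{those} edges, not to $|V_p|2^p$.

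What works is to keep $\phinorm{\bm d_p}$ as the quantity that absorbs the free endpoint, and to bound it from its level sets rather than through the operator norm. The level-$p$ count is $\bm 1_{V_p}^\top\bm A_p^{k}\bm 1=\langle\bm 1_{V_p},\bm A_p^{k-1}\bm d_p\rangle\le\sqrt{|V_p|}\,\phinorm{\bm A_p}^{k-1}\phinorm{\bm d_p}$.

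\begin{itemize}
\item $\bm d_p$ is supported on $\bar V_p$ and is at most $2^q$ on $V_q$. Hence its level set at any height $x\in(2^{q-1},2^q]$ lies in $V_q\cup\dots\cup V_p$, which gives $\phinorm{\bm d_p}\le\sum_{q\le p}2^q\sqrt{|V_q|+\dots+|V_p|}\le 2\sum_{r\le p}2^r\sqrt{|V_r|}$.
\item Set $m_r:=2^r|V_r|$, so that $\sum_r m_r\le 4m$. Then $\sqrt{|V_p|}\,2^r\sqrt{|V_r|}=2^{-(p-r)/2}\sqrt{m_p m_r}$.
\item Summing over $r\le p$ and then over $p$ is a convolution against a geometrically decaying kernel. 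This yields $\sum_p\sqrt{|V_p|}\,\phinorm{\bm d_p}\lesssim m$, and hence $\kappa_k(G)\lesssim m\max_p\phinorm{\bm A_p}^{k-1}$ with no logarithmic loss.
\end{itemize}

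If $\kappa_k$ is instead read as a maximum over levels, as Section~\ref{sec:results} suggests, the crude bound $\phinorm{\bm d_p}\le\int_0^{2^p}\sqrt{2m/x}\,dx=O(\sqrt{m2^p})$ already gives $\sqrt{|V_p|}\,\phinorm{\bm d_p}=O(\sqrt{m\,m_p})=O(m)$ per level. Summing that bound naively over levels would cost a $\sqrt{\log\Delta}$ factor.
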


The $\Phi$-norm of a matrix can be bounded via the following ``indicator'' version:

\begin{lmm}[\cite{DBLP:conf/stoc/DahlgaardKS17}]
\label{lmm:indicator-phi-norm}
Let $\bm A$ be a real $n \times n$ matrix. If, for all vectors $\bm v \in \{0,1\}^n$ we have
$ \phinorm{\bm{A}\bm{v}} \leq C  \phinorm{\bm{v}}$ for some value $C$, then $
\phinorm{\bm{A}} \leq 16C$.
\end{lmm}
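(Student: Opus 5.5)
The plan is to prove the lemma by a layer-cake decomposition of an arbitrary vector into indicator vectors, so that the hypothesis can be applied slice by slice. The one structural fact needed is that $\phinorm{\cdot}$ satisfies a triangle inequality, possibly only up to a constant. The constant $16$ leaves room for slack, so a quasi-triangle inequality would suffice.

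\emph{Step 1: $\phinorm{\cdot}$ is a norm on $\mathbb R^n$.} Let $v^*_1\ge v^*_2\ge\cdots\ge v^*_n$ be the decreasing rearrangement of $(|v_i|)_i$. For $x\in(v^*_{j+1},v^*_j]$ the level set $\{i:|v_i|\ge x\}$ has exactly $j$ elements. Evaluating the integral piecewise therefore gives
\[
\phinorm{\bm v}=\sum_{j=1}^n v^*_j\,(\sqrt j-\sqrt{j-1}).
\]
The weights $w_j=\sqrt j-\sqrt{j-1}$ are nonnegative and decreasing. By the rearrangement inequality, $\phinorm{\bm v}=\max_{\pi}\sum_j w_j|v_{\pi(j)}|$, where the maximum is over permutations $\pi$. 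This is a maximum of seminorms, hence a seminorm, and it is clearly definite. So it is a norm, and in particular satisfies Minkowski's inequality for integrals of vector-valued functions. The formula also shows that $\phinorm{\cdot}$ is monotone: $|u_i|\le|v_i|$ for all $i$ implies $\phinorm{\bm u}\le\phinorm{\bm v}$. If I get stuck here, I will fall back on the crude bound $\phinorm{\bm u+\bm w}\le 2(\phinorm{\bm u}+\phinorm{\bm w})$, obtained from $\{|u_i+w_i|\ge x\}\subseteq\{|u_i|\ge x/2\}\cup\{|w_i|\ge x/2\}$ and $\sqrt{a+b}\le\sqrt a+\sqrt b$. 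I will also use a dyadic rather than continuous decomposition, which costs further factors of $2$; all of this fits within $16$.

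\emph{Step 2: nonnegative vectors.} For $\bm v\ge 0$, write $\bm v=\int_0^\infty \bm 1_{\{\bm v\ge x\}}\,dx$, where $\bm 1_S$ is the indicator vector of $S$. This is a finite sum of step pieces, since only $n$ thresholds occur. By linearity, $\bm A\bm v=\int_0^\infty \bm A\bm 1_{\{\bm v\ge x\}}\,dx$. Applying the triangle inequality and then the hypothesis gives
\[
\phinorm{\bm A\bm v}\le\int_0^\infty C\,\phinorm{\bm 1_{\{\bm v\ge x\}}}\,dx=C\int_0^\infty\sqrt{|\{i:v_i\ge x\}|}\,dx=C\phinorm{\bm v},
\]
using $\phinorm{\bm 1_S}=\sqrt{|S|}$.

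\emph{Step 3: general vectors.} Write $\bm v=\bm v^+-\bm v^-$. Monotonicity gives $\phinorm{\bm v^\pm}\le\phinorm{\bm v}$. Then
\[
\phinorm{\bm A\bm v}\le\phinorm{\bm A\bm v^+}+\phinorm{\bm A\bm v^-}\le 2C\phinorm{\bm v},
\]
which is within $16C$.

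The main obstacle is Step 1, the triangle inequality. The rearrangement formula with decreasing weights resolves it cleanly. If that route fails, the quasi-norm fallback with dyadic level sets still yields a constant of at most $16C$.
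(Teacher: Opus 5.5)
The paper gives no proof of this lemma; it imports it, constant $16$ included, from Dahlgaard, Knudsen, and St\"ockel. So there is nothing in the paper to compare your route against. Your primary argument is a correct, complete proof, and it gives a sharper constant than the quoted statement.

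\textbf{Why your argument works.} The closed form $\phinorm{\bm v}=\sum_j v^*_j(\sqrt j-\sqrt{j-1})$ has positive, decreasing weights. This identifies $\phinorm{\cdot}$ as (up to normalization) the Lorentz $L^{2,1}$ norm on counting measure, hence a genuine norm. Your rearrangement-inequality argument for subadditivity is right.

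For $\bm v\ge 0$, the layer-cake identity $\bm v=\int_0^\infty \bm 1_{\{\bm v\ge x\}}\,dx$ is a finite sum. It is exact rather than an entrywise upper bound, so you never need $\bm A\ge 0$. This matters, since the statement allows any real matrix. Moreover $\int_0^\infty\sqrt{|\{i: v_i\ge x\}|}\,dx$ equals $\phinorm{\bm v}$ exactly, so nonnegative vectors incur no loss at all.

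The only loss is the sign split. That gives $\phinorm{\bm A}\le 2C$, and in fact $\sqrt2\,C$: write $p(x)$ and $q(x)$ for the numbers of entries that are $\ge x$ and $\le -x$, and use $\sqrt{p(x)}+\sqrt{q(x)}\le\sqrt{2(p(x)+q(x))}$. The paper only uses the lemma inside $\lesssim$ bounds, so the better constant is cosmetic. Still, your route is self-contained and does not rely on the unexplained $16$.

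\textbf{Drop the fallback.} It does not work as you describe it. The quasi-triangle inequality $\phinorm{\bm u+\bm w}\le 2(\phinorm{\bm u}+\phinorm{\bm w})$ does not yield $\phinorm{\sum_{j=1}^L\bm u_j}\lesssim\sum_j\phinorm{\bm u_j}$ uniformly in the number $L$ of summands. Compare the $\ell^{1/2}$ quasi-norm: it satisfies the same inequality with constant $2$, yet $L$ distinct unit vectors sum to quasi-norm $L^2$.

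A dyadic decomposition of $\bm v$ can have as many nonempty layers as $\bm v$ has distinct scales, and that number is not bounded by a constant. So the fallback would lose a factor growing with the number of layers, not a constant $16$. The genuine triangle inequality from your Step~1 is therefore essential, not optional.
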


\section{Warm-ups}
\label{sec:warm:ups}
This section presents three warmup results that illustrate how Tur\'an-type bounds and the
more general supersaturation phenomenon enable efficient algorithms for $2k$-closed walks
and $2k$-cycles on undirected graphs. We begin with a simple supersaturation result that
lets us solve $W_{2k}$-detection in $O(1)$ time and $W_{2k}$-listing in output-linear time.
Then, we prove that Conjecture~\ref{conj:supersaturation} implies a
$\tilde O(m^{2k/(k+1)}+t)$-time algorithm for $2k$-cycle listing, matching the
$C_{2k}$-detection exponent of Dahlgaard, Knudsen, and
St\"ockel~\cite{DBLP:conf/stoc/DahlgaardKS17}. This result was already established by
Vassilevska Williams and Westover~\cite{DBLP:conf/innovations/WilliamsW25} (attributed
to Jin and Xu); however, our analysis differs from theirs, and our algorithm --- which
avoids BFS traversal --- is friendlier to database-system implementations. Finally, as an
application of this result, we show that $C_4$-listing can be done in $\tilde O(m^{4/3}+t)$
time by proving that the supersaturation conjecture holds for $4$-cycles. This fact is
already known and is folklore.

\subsection{The $2k$-closed walk queries}

\begin{prop}
    Given an undirected graph $G=(V,E)$ and an integer $k \geq 2$, we can answer the
    Boolean $W_{2k}$-detection query in $O(1)$ time and the $W_{2k}$-listing query in
    $O(1+t)$ time, where $t$ is the output size.
\end{prop}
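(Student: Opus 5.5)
The plan rests on one observation: in an undirected graph, closed walks of even length exist trivially whenever there is at least one edge. If $\{u,w\}\in E$, then $v_1=u, v_2=w, v_3=u, \ldots, v_{2k}=w$ is a closed walk of length $2k$, because $E$ is symmetric and every consecutive pair, including the wrap-around pair $(v_{2k},v_1)=(w,u)$, is an edge. So the Boolean query $W_{2k}()$ is true if and only if $E\neq\emptyset$.

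\emph{Detection.} Answering $W_{2k}()$ therefore reduces to testing whether the edge relation is nonempty. In the standard model, where relations come with size information or we simply try to read a first tuple, this takes $O(1)$ time. This step is immediate, and it is the simplest instance of supersaturation: a single edge already forces a $2k$-closed walk.

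\emph{Listing.} This is the step needing care. The goal is to enumerate all tuples $(v_1,\ldots,v_{2k})$ with every $(v_j,v_{j+1})\in E$ (indices mod $2k$) in time $O(1+t)$. We build the walks by backtracking on a trie of the edge relation. Choose $v_1$ among vertices of positive degree, then choose $v_2, v_3, \ldots, v_{2k}$ successively from the adjacency list of the previous vertex, and at the final step keep only those $v_{2k}$ adjacent to $v_1$.

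The obstacle is wasted work at the last step: a partial walk $v_1,\ldots,v_{2k-1}$ might have many extensions $v_{2k}$ that fail to close. Undirectedness resolves this. The valid choices for $v_{2k}$ are exactly $N(v_{2k-1})\cap N(v_1)$, and this set is nonempty because it contains $v_2$ when $v_{2k-1}=v_1$. More generally, we argue that every partial prefix extends to at least one output tuple, so no branch of the search is dead. We do this by bounding the branching against the outputs: each prefix $v_1,\ldots,v_j$ with $j\le 2k-2$ can be completed by bouncing back and forth, for example $\ldots,v_j,v_{j-1},v_j,\ldots$, which is a closed walk of length $2k$ provided the parities match.

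For the last vertex we enumerate $N(v_{2k-1})\cap N(v_1)$ directly. To make this cost output-linear, we iterate over the smaller of the two lists using hash lookups, and charge each scanned element either to an output tuple or to the prefix, which is itself charged to a distinct output. If the charging argument on the final intersection proves delicate, the fallback is to compute, for each pair $(v_1,v_{2k-1})$ arising from prefixes, the common-neighborhood list once. Each such pair yields at least one output, and every element of the common neighborhood is an output, so the cost is $O(1+t)$ with $k$ treated as a constant.
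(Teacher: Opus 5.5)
Your detection argument is the paper's. The listing argument, however, has a genuine gap for $k\ge 3$: the claim that no branch of the search is dead is false once a prefix passes the midpoint. Completing $v_1,\ldots,v_j$ requires a walk from $v_j$ back to $v_1$ with exactly $2k-j+1$ edges. Bouncing between $v_j$ and $v_{j-1}$ never returns to $v_1$, and retracing the prefix uses $j-1$ edges, which fits only when $j\le k+1$.

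Beyond that point, dead branches can dominate the search. Take a $d$-regular graph on $n$ vertices with girth larger than $2k$; such graphs exist for every $d$ by the Erd\H{o}s--Sachs theorem. Every closed $2k$-walk in it lifts to a closed walk in the $d$-regular tree, so $t\le\binom{2k}{k}\,n\,d^{k}$. Your search, however, visits all $n\,d^{2k-2}$ prefixes $(v_1,\ldots,v_{2k-1})$, so the overhead factor $d^{k-2}/\binom{2k}{k}$ is unbounded.

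The fallback does not help. For a non-backtracking prefix in this graph, $v_1$ and $v_{2k-1}$ have no common neighbor, so the assertion that each pair $(v_1,v_{2k-1})$ yields an output is false. In any case, those pairs are found only by enumerating the prefixes. Charging the failed hash probes of the last intersection to the prefix is also unjustified, since a single prefix may absorb up to $\min(\deg v_1,\deg v_{2k-1})$ of them.

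The missing idea is to stop at the midpoint and meet in the middle, which is what the paper does. Your retracing argument is exactly right for half-walks. The map $(v_1,\ldots,v_{k+1})\mapsto(v_1,\ldots,v_{k+1},v_k,\ldots,v_2)$ injects $k$-walks into closed $2k$-walks, so there are at most $t$ of them. The paper proves the same inequality algebraically: $t=\mathrm{tr}(\bm A^{2k})=\sum_{u,v}(\bm A^k)_{u,v}^2\ge\sum_{u,v}(\bm A^k)_{u,v}=\bm 1^\top\bm A^k\bm 1$.

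It then uses the two-bag tree decomposition with bags $\{v_1,\ldots,v_{k+1}\}$ and $\{v_{k+1},\ldots,v_{2k},v_1\}$. Each intermediate relation is the set of $k$-walks. Your backtracking enumerates these without dead branches, since every shorter prefix extends to a $k$-walk by bouncing. Yannakakis's algorithm then joins the two relations on the shared pair $(v_1,v_{k+1})$ in time linear in the number of $k$-walks plus $t$, i.e.\ $O(1+t)$. Replacing your forward extension through $v_{k+2},\ldots,v_{2k}$ with this join repairs the proof.
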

\begin{proof}
    Since we can always walk back and forth on an edge, the $W_{2k}$-detection query is
    trivially true if $G$ has at least one edge. This is a trivial example of a
    Tur\'an-type theorem for $2k$-closed walks. Note that the reasoning does not work
    for odd-length closed walks.

    Next, consider the listing version. Using the tree decomposition from the right of
    Figure~\ref{fig:td}, we can list $2k$-closed walks in time $O(\text{\# $k$-walks} + t)$,
    as explained in Section~\ref{subsec:td}. Thus, it is sufficient to prove that
    $\text{\# $k$-walks} \leq t$.

    Let $\bm A$ be the adjacency matrix of $G$ and $\bm 1$ be the all-$1$ vector of
    dimension $n$, where $n$ is the number of vertices in $G$. The number of $k$-walks in
    $G$ is $\bm 1^\top \bm A^k \bm 1$, and the number of $2k$-closed walks in $G$ is
    \begin{align*}
        t = \text{trace}(\bm A^{2k}) &= \text{trace}((\bm A^k)^\top \bm A^k)
        = \norm{\bm A^k}_F^2
        = \sum_{u,v} (\bm A^k)_{u,v}^2
        \geq \sum_{u,v} (\bm A^k)_{u,v}
        = \bm 1^\top \bm A^k \bm 1
        = \text{\# $k$-walks}.
    \end{align*}
\end{proof}

\subsection{Why the supersaturation conjecture implies efficient $2k$-cycle listing}

We now prove the implication. Our analysis differs from those
in~\cite{DBLP:conf/innovations/WilliamsW25,DBLP:conf/stoc/DahlgaardKS17}, and so may be of
independent interest.

\begin{lmm}
Suppose Conjecture~\ref{conj:supersaturation} holds. Let $\bm A$ be the adjacency matrix
of a graph with maximum degree $\Delta$, $m$ edges, and $t$ $2k$-cycles.
Then, $\norm{\bm A}_\Phi = O(m^{1/(k+1)} + t^{1/(2k)}\log\Delta + \sqrt\Delta)$.
\label{lmm:conj:supersaturation:implies:Phi:norm}
\end{lmm}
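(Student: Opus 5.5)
By Lemma~\ref{lmm:indicator-phi-norm}, it suffices to show that for every indicator vector $\bm v=\bm 1_S$ with $S\subseteq V$ we have
\[
\phinorm{\bm A\bm v}\lesssim \bigl(m^{1/(k+1)}+t^{1/(2k)}\log\Delta+\sqrt\Delta\bigr)\sqrt{|S|}.
\]
Note that $\phinorm{\bm 1_S}=\sqrt{|S|}$. The vector $\bm A\bm 1_S$ records, for each vertex $u$, the number $d_S(u)$ of neighbours of $u$ in $S$; these counts lie in $[0,\Delta]$.

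I will split the $\Phi$-norm integral dyadically. Let $B_j=\{u: d_S(u)\ge 2^j\}$. Then
\[
\phinorm{\bm A\bm 1_S}\le \sum_{j=0}^{\lceil\log\Delta\rceil} 2^{j}\sqrt{|B_j|}.
\]
So I need to bound each term $2^j\sqrt{|B_j|}$ by the target times $\sqrt{|S|}$. Only the $t$-term needs the $\log\Delta$ factor; the other two contributions will be arranged to sum geometrically.

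Fix $j$ and put $x=2^j$. The bipartite graph between $A:=B_j$ and $B:=S$ (doubling vertices if $S$ and $B_j$ overlap, which at most doubles counts) has $e\ge x|B_j|$ edges. I then distinguish three cases.

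\textbf{Case 1: $e$ is at most a constant times $|A|+|B|$.} Then $x|B_j|\lesssim |B_j|+|S|$. For $x\ge 2$ this gives $|B_j|\lesssim |S|/x$, so $x\sqrt{|B_j|}\lesssim \sqrt{x|S|}\le\sqrt{\Delta|S|}$. These terms grow geometrically in $j$, so their sum is still $O(\sqrt{\Delta|S|})$.

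\textbf{Case 2: $e\lesssim (|A||B|)^{(k+1)/2k}$.} Then $(x|B_j|)^{2k}\lesssim (|B_j||S|)^{k+1}$, i.e.\ $x^{2k}|B_j|^{k-1}\lesssim |S|^{k+1}$. I combine this with the trivial bound $x|B_j|\le 2m$ (edges counted from the $B_j$ side), and I will interpolate between the two to get
\[
x\sqrt{|B_j|}\lesssim m^{1/(k+1)}\sqrt{|S|}.
\]
Concretely, raise $x^2|B_j|$ to a suitable power and combine with the two inequalities to eliminate $|B_j|$ in favour of $m$ and $|S|$. Precisely, the aim is $x^2|B_j|\le m^{2/(k+1)}|S|$: the first inequality gives $|B_j|\le |S|^{(k+1)/(k-1)}x^{-2k/(k-1)}$, the second gives $|B_j|\le 2m/x$, and I take a weighted geometric mean of these. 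Checking the exponents is the main obstacle. If they do not balance uniformly, I will instead split on whether $x\lessgtr (m/|S|)^{\alpha}$ for a suitable $\alpha$, and use the better of the two bounds in each regime. The resulting terms should again be geometric in $j$ (or contribute only a log factor, which can be absorbed by careful accounting near the balancing point).

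\textbf{Case 3: the hypothesis of Conjecture~\ref{conj:supersaturation} holds.} The subgraph then has $\Omega(e^{2k}/(|A||B|)^k)$ $2k$-cycles, and these are cycles of $G$ up to the doubling, which only changes constants. Hence
\[
t\gtrsim \frac{(x|B_j|)^{2k}}{(|B_j||S|)^k}=\frac{x^{2k}|B_j|^k}{|S|^k},
\]
so $x^2|B_j|\lesssim t^{1/k}|S|$, i.e.\ $x\sqrt{|B_j|}\lesssim t^{1/(2k)}\sqrt{|S|}$. Summing over the $O(\log\Delta)$ levels gives the $t^{1/(2k)}\log\Delta$ term.

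Adding the three cases and applying Lemma~\ref{lmm:indicator-phi-norm} (which costs a factor of 16) yields the claimed bound on $\phinorm{\bm A}$.
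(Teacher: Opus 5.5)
Your route is the paper's. You reduce to indicator vectors via Lemma~\ref{lmm:indicator-phi-norm}, bound the sizes of the superlevel sets $\{u : d_S(u)\ge x\}$ through the contrapositive of Conjecture~\ref{conj:supersaturation} split into its $s$-, $(fs)^{(k+1)/(2k)}$- and $t$-regimes, and sum over $x$ (dyadically, where the paper integrates; this is equivalent up to constants). Cases~1 and~3 match the paper. Case~2, however, is the only source of the $m^{1/(k+1)}$ term, and you leave it as an unexecuted plan. One of your fallbacks is also not available: a $\log\Delta$ on the $m^{1/(k+1)}$ term cannot be absorbed into the claimed bound. The step closes as follows. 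With $s=|S|$ and $x=2^j$, the two constraints give
\[
x^2|B_j|\le\min\bigl\{s^{(k+1)/(k-1)}x^{-2/(k-1)},\;2mx\bigr\}.
\]
The two expressions cross, up to a constant factor, at $x^*=s/m^{(k-1)/(k+1)}$. This is exactly the paper's split point, and it is not of the form $(m/s)^\alpha$. Your weighted geometric mean, with weights $(k-1)/(k+1)$ and $2/(k+1)$, does balance and gives $x^2|B_j|\lesssim m^{2/(k+1)}s$ at every level. But summing this uniform bound over the levels loses a $\log\Delta$. What avoids the loss is that the square root of the minimum grows by a factor $\sqrt2$ per level below $x^*$ and shrinks by a factor $2^{-1/(k-1)}$ per level above it. So the sum is $O_k\bigl(\sqrt{mx^*}\bigr)=O_k\bigl(m^{1/(k+1)}\sqrt s\bigr)$.

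There are two smaller points. First, passing to the bipartite double cover when $S\cap B_j\neq\emptyset$ does not preserve cycles. A $2k$-cycle of the cover projects only to a closed $2k$-walk of $G$ (the cover of a triangle is a $C_6$), so ``cycles of $G$ up to the doubling'' is false as stated. Instead, assign each vertex of $S\cap B_j$ randomly to one of the two sides. Each edge of $E(B_j,S)$ then survives with probability at least $1/2$, and the result is a genuine bipartite subgraph of $G$; this is how the paper handles overlap in the proof of Lemma~\ref{lmm:edge:count:listing}. Second, Case~1 only yields $|B_j|\lesssim s/x$ once $x$ exceeds twice the implicit constant, not for all $x\ge 2$. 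For the $O(1)$ levels below that, use $|B_j|\le s\Delta$, as the paper does for $x\le 2C$.
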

\begin{proof}
From Lemma~\ref{lmm:indicator-phi-norm}, it is sufficient to show that, for any subset
$S$ of vertices, we have $\norm{\bm A_S}_\Phi = O(m^{1/(k+1)} + t^{1/(2k)}\log \Delta + \sqrt\Delta) \sqrt{|S|}$.
Let $\bm 1_S$ denote the indicator vector of $S$, let $E(v,S)$ denote the set of edges
between $v$ and $S$, and let $E(B, S)$ denote the set of edges between the set of vertices
$B$ and $S$. Define
\begin{align}
    B(x) &:= \{ v : |E(v,S)| \geq x \} && f(x) := |B(x)| && s := |S|
\end{align}
Then, our goal is to bound
$\phinorm{\bm A\bm 1_S} = \int_{x=0}^{\Delta} \sqrt{f(x)} \, dx$,
subject to the constraints we know about $f$.
Since there are at most $m$ edges in total, and since every vertex in $S$ has degree
at most $\Delta$, we have
\begin{align}
    x \cdot f(x) &\leq 2m
    && x \cdot f(x) \leq s\Delta
    && \forall x \leq \Delta.
    \label{eq:edge:count:constraint}
\end{align}

Next, from Conjecture~\ref{conj:supersaturation}, there is a constant $C$ (depending on $k$) such that
\begin{align*}
    x \cdot f(x) \leq |E(B(x),S)| \leq
    C(f(x) + s + (f(x)s)^{(k+1)/2k} + t^{1/2k} \sqrt{f(x)s}).
\end{align*}
Thus, when $x \geq 2C$ the following holds:
\begin{align*}
    \frac x 2 \cdot f(x)
    &\leq C(s + (f(x)s)^{(k+1)/2k} + t^{1/2k} \sqrt{f(x)s}) \\
    &\leq 3C\max\bigl\{s, (f(x)s)^{(k+1)/2k}, t^{1/2k} \sqrt{f(x)s}\bigr\}.
\end{align*}
As a consequence, when $x \geq 2C$, one of the following three cases must hold:
\begin{itemize}
\item $x \cdot f(x) \leq 6Cs$, implying $\sqrt{f(x)} \leq O(\sqrt{s/x})$.
\item $x \cdot f(x) \leq 6C(f(x)s)^{(k+1)/2k}$, implying $\sqrt{f(x)} \leq O(s^{(k+1)/(2(k-1))} / x^{k/(k-1)})$.
\item $x \cdot f(x) \leq 6C t^{1/2k} \sqrt{f(x)s}$, implying $\sqrt{f(x)} \leq O(t^{1/(2k)} \sqrt{s} / x)$.
\end{itemize}
Combining these with the trivial bound $\sqrt{f(x)} = O(\sqrt{m/x})$, we have (recall
that $A \lesssim B$ means $A = O(B)$)
\begin{align*}
    \sqrt{f(x)} \lesssim
        \sqrt{\frac s x} +
        \min\left\{ \sqrt{\frac m x}, \frac{s^{(k+1)/(2(k-1))}}{x^{k/(k-1)}} \right\} +
        \frac{t^{1/(2k)} \sqrt{s}}{x}.
\end{align*}
Thus, we can bound $\phinorm{\bm A\bm 1_S}$ as follows, using $\sqrt{f(x)} \leq \sqrt{s\Delta}$ for $x \leq 2C$:
\begin{align*}
    \phinorm{\bm A\bm 1_S}
    &\lesssim \int_{x=0}^{2C} \sqrt{f(x)} \, dx + \int_{x=2C}^{\Delta} \sqrt{f(x)} \, dx \\
    &\lesssim \int_{0}^{2C} \sqrt{s\Delta} \, dx
    + \int_{x=2C}^{\Delta} \sqrt{\frac s x} dx
    + \int_{x=2C}^{\Delta} \min\left\{ \sqrt{\frac m x}, \frac{s^{(k+1)/(2(k-1))}}{x^{k/(k-1)}} \right\} dx
    + \int_{x=2C}^{\Delta} \frac{t^{1/(2k)} \sqrt{s}}{x} \, dx \\
    &\lesssim \sqrt{s\Delta} + t^{1/(2k)} \sqrt{s} \log \Delta +
        \int_{x=2C}^{\Delta} \min\left\{ \sqrt{\frac m x}, \frac{s^{(k+1)/(2(k-1))}}{x^{k/(k-1)}} \right\} dx.
\end{align*}
Since
\begin{align*}
 \sqrt{\frac{m}{x}} \leq \frac{s^{(k+1)/(2(k-1))}}{x^{k/(k-1)}}
    &\iff x \leq \frac{s}{m^{(k-1)/(k+1)}},
\end{align*}
setting $x^* = s/m^{(k-1)/(k+1)}$, we break the last integral into two parts:
\begin{align*}
    \int_{x=2C}^{\Delta} \min\left\{ \sqrt{\frac m x}, \frac{s^{(k+1)/(2(k-1))}}{x^{k/(k-1)}} \right\} dx
    \leq \int_{x=2C}^{x^*} \sqrt{\frac m x} dx
    + \int_{x=x^*}^{\Delta} \frac{s^{(k+1)/(2(k-1))}}{x^{k/(k-1)}} dx
    \lesssim m^{1/(k+1)} s^{1/2}.
\end{align*}
Overall, we have $\phinorm{\bm A\bm 1_S} \lesssim
(\sqrt{\Delta} + t^{1/(2k)} \log \Delta + m^{1/(k+1)}) \sqrt s$ as desired.
\end{proof}

\begin{prop}
    If Conjecture~\ref{conj:supersaturation} holds, then we can list $t$ $2k$-cycles in time
    $\tilde O(m^{2k/(k+1)} + t)$.
\end{prop}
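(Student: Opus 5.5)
We follow the AYZ/Dahlgaard et al. query plan from Section~\ref{subsec:phi:norm}, combined with color-coding from Section~\ref{subsec:color:coding}, and charge the cost of the light plans to $t$ through Lemma~\ref{lmm:conj:supersaturation:implies:Phi:norm}.

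\textbf{Reduction to closed walks.} By color-coding we reduce $C_{2k}$-listing to $2^{O(k)}\log n$ instances of $W_{2k}$-listing, each with unary filters $C_{\ell,i}$. Each instance outputs only genuine $2k$-cycles, so the total output over all instances is $O(t)$ up to the factor $2^{O(k)}$.

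\textbf{Heavy/light split.} Fix the threshold $\Delta = m^{2/(k+1)}$. A vertex is heavy if its degree exceeds $\Delta$; there are $O(m/\Delta)$ of them.
\begin{itemize}
\item For each position $i$, the heavy plan (left of Figure~\ref{fig:td}) has bags of size $O(m\cdot m/\Delta) = O(m^{2-2/(k+1)}) = O(m^{2k/(k+1)})$.
\item The all-light part is split further by degree buckets $V_p$, anchored at the highest-degree vertex on the cycle, as in Section~\ref{subsec:phi:norm}. Each resulting two-bag plan has bags of size at most the number of capped $k$-walks.
\end{itemize}
By Lemma~\ref{lmm:capped-walks}, the number of capped $k$-walks satisfies $\kappa_k \lesssim m\cdot(\max_p\phinorm{\bm A_p})^{k-1}$.

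\textbf{Bounding the $\Phi$-norm.} Each $\bm A_p$ is the adjacency matrix of a subgraph of $G$ with maximum degree at most $\Delta$, at most $m$ edges, and at most $t$ $2k$-cycles. Lemma~\ref{lmm:conj:supersaturation:implies:Phi:norm} therefore gives
\[
\phinorm{\bm A_p}\lesssim m^{1/(k+1)} + t^{1/(2k)}\log\Delta + \sqrt\Delta .
\]
With our choice of $\Delta$, we have $\sqrt\Delta = m^{1/(k+1)}$. Hence
\[
\kappa_k \lesssim m\bigl(m^{1/(k+1)} + t^{1/(2k)}\log m\bigr)^{k-1}
\lesssim m^{2k/(k+1)} + m\, t^{(k-1)/(2k)}\log^{k-1} m .
\]

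\textbf{Absorbing the cross term.} We apply weighted AM-GM, or equivalently a case split on which term dominates, to $m\,t^{(k-1)/(2k)}$. If $t \le m^{2k/(k+1)}$, then
\[
t^{(k-1)/(2k)} \le m^{(k-1)/(k+1)},
\]
so $m\,t^{(k-1)/(2k)} \le m^{2k/(k+1)}$.

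If instead $t > m^{2k/(k+1)}$, then $m < t^{(k+1)/(2k)}$. This gives
\[
m\,t^{(k-1)/(2k)} < t^{(k+1)/(2k)+(k-1)/(2k)} = t .
\]

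So in both cases the bag sizes are $\tilde O(m^{2k/(k+1)}+t)$. Yannakakis's algorithm then runs in time linear in the bag sizes plus the output, and summing over the $O(k\log\Delta)$ sub-plans and the color-coding repetitions gives $\tilde O(m^{2k/(k+1)}+t)$.

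\textbf{Main obstacle.} The delicate step is making sure the $t$ used in the $\Phi$-norm bound is the true number of $2k$-cycles of $G$ and not something larger. The subgraphs $\bm A_p$ are subgraphs of $G$, so they have at most $t$ $2k$-cycles and monotonicity handles this. The $\log^{k-1}m$ factor from the $\log\Delta$ term is absorbed into $\tilde O$.
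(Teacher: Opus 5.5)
Your proposal is correct and follows the paper's proof: the same heavy/light split with $\Delta = m^{2/(k+1)}$, Lemma~\ref{lmm:capped-walks} combined with Lemma~\ref{lmm:conj:supersaturation:implies:Phi:norm} for the light plans, and absorbing the cross term $m\,t^{(k-1)/(2k)}\log^{k-1}\Delta$ into $m^{2k/(k+1)}+t$. Your case split plays the role of the paper's Young's-inequality step, but it leaves the $\log^{k-1}$ factor on the $t$ term as well; splitting at $t \approx m^{2k/(k+1)}(\log\Delta)^{2k(k-1)/(k+1)}$ instead puts all the polylog overhead on the $m$-term, as the paper does, although under the $\tilde O$ convention either version suffices.
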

\begin{proof}
From the AYZ approach described in Section~\ref{subsec:td} and Section~\ref{subsec:color:coding},
call a vertex ``heavy'' if its degree is more than
$\Delta$. The number of heavy vertices is at most $2m/\Delta$; hence, the heavy cycles
can be listed in time $\tilde O(m^2/\Delta + t)$.
The cycles involving light vertices can be listed in time $\tilde O(\kappa_k(G) + t)$. From
Lemma~\ref{lmm:capped-walks} and the bound on $\norm{\bm A}_\Phi$ from
Lemma~\ref{lmm:conj:supersaturation:implies:Phi:norm}, we have
\begin{align*}
\kappa_k(G)
    &\lesssim m \cdot \max_{p} \phinorm{\bm A_p}^{k-1}\\
    &\lesssim m \cdot \left(
        \sqrt{\Delta} + t^{1/(2k)} \log \Delta + m^{1/(k+1)}
   \right)^{k-1} \\
    &\lesssim m \cdot
        \Delta^{\frac{k-1}{2}} + m^{\frac{k-1}{k+1}} + t^{\frac{k-1}{2k}} \log^{k-1}\Delta\\
    &=
        m\Delta^{\frac{k-1}{2}} + m^{\frac{2k}{k+1}} + mt^{\frac{k-1}{2k}} \log^{k-1}\Delta\\
    (\text{Young's inequality})
    &\lesssim
        m\Delta^{\frac{k-1}{2}} + m^{\frac{2k}{k+1}} + m^{\frac{2k}{k+1}} (\log\Delta)^{\frac{2k(k-1)}{k+1}} + t.
\end{align*}
We balance the terms by setting $\Delta = m^{2/(k+1)}$, yielding the desired runtime.
\end{proof}

\subsection{The $4$-cycle queries}

As an application of Lemma~\ref{lmm:conj:supersaturation:implies:Phi:norm}, we show
how to obtain the $\tilde O(m^{4/3} + t)$-time algorithm for $C_4$-listing by reproving
the following unbalanced supersaturation result for $4$-cycles, a special case of
Conjecture~\ref{conj:supersaturation} that is already known in the literature. We include
the proof here as a warm-up.

\begin{prop}[Unbalanced supersaturation for \(4\)-cycles: the \(k=2\) case]
Given a bipartite graph \(G=(A\cup B,E)\) with \(a:=|A|\), \(b:=|B|\) and
\(m=\Omega(a+b+(ab)^{3/4})\) edges, there are
\(\Omega(m^4/(a^2b^2))\) copies of \(C_4\).
\end{prop}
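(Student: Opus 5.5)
We count $4$-cycles through pairs of vertices in $A$ and their common neighbourhoods in $B$, and apply Jensen's inequality twice. For $u,u'\in A$ distinct, let $c(u,u') := |N(u)\cap N(u')|$ denote their number of common neighbours in $B$. Every $4$-cycle in $G$ has the form $u\,w\,u'\,w'$ with $u\ne u'$ in $A$ and $w\ne w'$ in $B$. It is determined, up to an $O(1)$ factor, by the unordered pair $\{u,u'\}$ together with an unordered pair of common neighbours. Hence
\begin{align*}
  \#C_4 \;\gtrsim\; \sum_{\{u,u'\}\subseteq A,\, u\neq u'} \binom{c(u,u')}{2}.
\end{align*}

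The first step is to lower bound the total codegree. Double counting paths of length two with both endpoints in $A$ gives
\begin{align*}
  \sum_{u\neq u'} c(u,u') \;=\; \sum_{w\in B} d(w)\bigl(d(w)-1\bigr)
  \;\ge\; \sum_{w\in B} d(w)^2 - m \;\ge\; \frac{m^2}{b} - m,
\end{align*}
where the last inequality is Cauchy--Schwarz. The hypothesis $m=\Omega(b)$, with a large enough implied constant, gives $m \le m^2/(2b)$. Therefore the ordered sum is $\gtrsim m^2/b$, and the unordered sum over the roughly $a^2/2$ pairs is also $\gtrsim m^2/b$.

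The second step is to apply convexity to $x\mapsto\binom{x}{2}$, extended convexly by $0$ for $x\le 1$, over the at most $a^2$ pairs $\{u,u'\}$. Jensen's inequality gives
\begin{align*}
  \sum \binom{c}{2} \;\gtrsim\; a^2 \binom{\bar c}{2}, \qquad \bar c \gtrsim \frac{m^2}{a^2 b}.
\end{align*}
We need $\bar c \ge 3$, say, so that $\binom{\bar c}{2}\gtrsim \bar c^2$. This is exactly where the hypothesis $m=\Omega((ab)^{3/4})$ enters: it gives $m^2/(a^2b) \gtrsim a^{-1/2} b^{1/2}$. When $b \gtrsim a$, this is at least a large constant, and we conclude
\begin{align*}
  \#C_4 \;\gtrsim\; a^2\,\frac{m^4}{a^4b^2} \;=\; \frac{m^4}{a^2b^2}.
\end{align*}

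The main obstacle is that the argument is asymmetric, so we must choose which side plays the role of the pairs. By symmetry, we run the argument with the roles of $A$ and $B$ chosen so that the side of the pairs is the smaller one; that is, we assume without loss of generality that $a\le b$. Then $\bar c \gtrsim (b/a)^{1/2}$ times a constant that grows with the hypothesis constant, so $\bar c$ is large. This also requires $m=\Omega(b)$ for the first step, which is part of the hypothesis. The resulting bound $m^4/(a^2b^2)$ is symmetric in $a$ and $b$, so nothing is lost by this choice.
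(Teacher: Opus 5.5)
Your proposal is correct and is essentially the paper's argument. You lower-bound the number of $2$-paths through the larger side using $m\gtrsim\max(a,b)$, then apply Jensen to the codegrees of pairs on the smaller side, with $m\gtrsim(ab)^{3/4}$ guaranteeing an average codegree of at least a constant; the only differences are cosmetic (the paper assumes $a\ge b$ and takes pairs in $B$, and uses convexity of $\binom{x}{2}$ where you use Cauchy--Schwarz).
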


\begin{proof}
Let \(t\) be the number of \(C_4\)'s and assume that \(a\ge b\) without loss of generality.
For \(u\in A\), let \(d(u)\) be its degree. The number of \(2\)-paths whose
middle vertex lies in \(A\) is
\[
    P:=\sum_{u\in A}\binom{d(u)}{2}
    \ge a\binom{\frac1a\sum_{u\in A}d(u)}{2}
    =a\binom{m/a}{2}
    \ge \frac{m^2}{4a},
\]
where we used the convexity of \(x\mapsto x(x-1)/2\) and \(m\ge 2a\).

For \(x,y\in B, x \neq y\), let \(q(x,y):=|N(x)\cap N(y)|\). Then
\(P=\sum_{\{x,y\}\subseteq B, x \neq y}q(x,y)\), and
\[
    t=\sum_{\{x,y\}\subseteq B, x \neq y}\binom{q(x,y)}{2}
    \ge \binom b2\binom{P/\binom b2}{2}
    \ge \frac{P^2}{4\binom b2}
    \ge \frac{P^2}{2b^2}
    \ge \frac{m^4}{32a^2b^2}.
\]
Here we used \(P/\binom b2\ge 2\), which follows from the assumption on \(m\):
for a sufficiently large constant, \(m\ge 2(ab)^{3/4}\ge 2a^{1/2}b\), since
\(a\ge b\), and hence \(P\ge m^2/(4a)\ge b^2\ge 2\binom b2\).
\end{proof}

\begin{cor}
    We can list $t$ $4$-cycles in time $\tilde O(m^{4/3} + t)$.
\end{cor}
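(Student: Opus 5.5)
The corollary follows by specializing the implication proved in this section to $k=2$. The only nontrivial input is Conjecture~\ref{conj:supersaturation} for $k=2$, and the preceding proposition (unbalanced supersaturation for $C_4$) supplies exactly that. For $k=2$ the conjecture's hypothesis reads $m=\Omega(|A|+|B|+(|A||B|)^{3/4})$ and its conclusion reads $t=\Omega(m^4/(|A|^2|B|^2))$, which is precisely what the proposition proves. So the conjecture holds unconditionally at $k=2$.

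I would then rerun the argument of Lemma~\ref{lmm:conj:supersaturation:implies:Phi:norm} with $k=2$. For a set $S$ of size $s$, apply the proposition to the bipartite graph between $B(x)$ and $S$, with $a=f(x)$ and $b=s$. Its contrapositive gives
\[
x f(x)\le |E(B(x),S)|\le C\bigl(f(x)+s+(f(x)s)^{3/4}+t^{1/4}\sqrt{f(x)s}\bigr).
\]
The $t^{1/4}\sqrt{f(x)s}$ term comes from $t\gtrsim m'^4/(f(x)^2s^2)$ whenever $m'$ exceeds the threshold. One subtlety must be checked here: $E(B(x),S)$ need not be a genuinely bipartite graph, since $B(x)$ and $S$ may overlap. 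I would handle this in the standard way: take the bipartite double cover restricted to $B(x)\times S$. Each $4$-cycle in the double cover projects to a closed $4$-walk, and the degenerate walks can be charged to $O(f(x)s)$-sized terms that are absorbed into the linear and $(f(x)s)^{3/4}$ terms. The cleaner alternative is a random bipartition, which loses only a constant factor in edges. The integral computation then goes through verbatim and yields
\[
\phinorm{\bm A}=O\bigl(m^{1/3}+t^{1/4}\log\Delta+\sqrt\Delta\bigr).
\]

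Next I would invoke the listing proposition for $k=2$. Heavy vertices, those of degree greater than $\Delta$, cost $\tilde O(m^2/\Delta+t)$. The light part costs $\tilde O(\kappa_2(G)+t)$, and Lemma~\ref{lmm:capped-walks} gives
\[
\kappa_2(G)\lesssim m\bigl(\sqrt\Delta+m^{1/3}+t^{1/4}\log\Delta\bigr).
\]
Young's inequality gives $m t^{1/4}\log\Delta\lesssim m^{4/3}\log^{4/3}\Delta+t$. Choosing $\Delta=m^{2/3}$ balances $m^2/\Delta$ against $m\sqrt\Delta$ at $m^{4/3}$. The color-coding reduction from $C_4$ to $W_4$ adds only an $O(\log n)$ factor, so the total is $\tilde O(m^{4/3}+t)$.

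The main obstacle is the middle step: confirming that the supersaturation bound legitimately applies to the possibly non-bipartite, overlapping edge set $E(B(x),S)$, and that the hypothesis threshold of the proposition is either met or its failure is already covered by the other terms of the maximum. Everything else is direct substitution into results already established in this section.
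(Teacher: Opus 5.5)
Your proposal is correct and follows the paper's route: the unbalanced $C_4$ supersaturation proposition is exactly Conjecture~\ref{conj:supersaturation} at $k=2$. The corollary is then the $k=2$ instance of the proposition that the conjecture yields $\tilde O(m^{2k/(k+1)}+t)$-time listing, via Lemma~\ref{lmm:conj:supersaturation:implies:Phi:norm}, Lemma~\ref{lmm:capped-walks}, Young's inequality, and $\Delta=m^{2/3}$.

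Your concern about $B(x)\cap S\neq\emptyset$ is legitimate: the warm-up lemma glosses over it, and the paper handles overlap only later, by a random split in Lemma~\ref{lmm:edge:count:listing}. Either of your fixes works. In fact, for $C_4$ the double cover has no degenerate walks at all. A $4$-cycle $(u_1,0),(v_1,1),(u_2,0),(v_2,1)$ projects to four pairwise distinct vertices, since consecutive ones are adjacent and $u_1\neq u_2$, $v_1\neq v_2$. Hence the double cover has at most $2t$ four-cycles and at least $x f(x)$ edges.
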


\section{Main Results}
\label{sec:results}
The key observation is that the structure of the proof of
Theorem~1.1 of~\cite{DBLP:conf/stoc/DahlgaardKS17} can be adapted
to the \(C_{2k}\)-listing setting. The first step is to replace the
edge-bounding Lemma~3.4 of~\cite{DBLP:conf/stoc/DahlgaardKS17},
which assumes \(C_{2k}\)-freeness, with a supersaturation version
that depends on the number \(t\) of \(C_{2k}\)'s.
We will use the following supersaturation result of Jiang and
Yepremyan~\cite{DBLP:journals/cpc/JiangY20}.

\begin{lmm}[Jiang and Yepremyan~\cite{DBLP:journals/cpc/JiangY20}]
    Given a positive integer constant $k$, there are
    positive constants $c$ and $d$,
    both depend on $k$, such that the following holds.
    Let $G$ be a graph with $n$ vertices and $m \geq c n^{1+1/k}$ edges. Then, $G$ contains
    at least $d(m/n)^{2k}$ copies of $C_{2k}$.
    \label{lmm:jiang:yepremyan}
\end{lmm}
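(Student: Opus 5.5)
The plan is to reduce the general statement to a ``threshold'' statement by random vertex sampling. The threshold statement is then the real content, and I would prove it by a regularization step followed by a Bondy--Simonovits-type path-counting argument.

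\textbf{Step 1 (threshold supersaturation).} First I would prove the following. There are constants $c_0, d_0$ such that every graph $H$ on $N$ vertices with $e(H) \ge c_0 N^{1+1/k}$ contains at least $d_0 N^2$ copies of $C_{2k}$. Note that at average degree $\Theta(N^{1/k})$ we have $N^2 = \Theta((e(H)/N)^{2k})$, so this is exactly the lemma at the threshold.

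\emph{Regularization.} Using the Erd\H{o}s--Simonovits (or Jiang--Seiver) regularization lemma, I pass to a $K$-almost-regular subgraph $H'$. Its parameters are $N'$ vertices and minimum degree $\delta \ge \Omega(N'^{1/k})$. I then count $k$-paths in $H'$. There are $\Theta(N'\delta^k)$ of them, so for a typical pair $(x,y)$ the number of $k$-paths between $x$ and $y$ is about $\delta^k/N' = \Omega(C^k)$, i.e.\ a large constant.

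\emph{Extracting cycles from path pairs.} The Bondy--Simonovits / Faudree--Simonovits ``theta-graph'' argument shows the following: if many pairs are joined by many $k$-paths, then a constant fraction of pairs of such paths must be internally disjoint. Otherwise a shorter even cycle or a dense bipartite structure appears, and that structure is itself forced to contain many $C_{2k}$. Each internally disjoint pair of $k$-paths with common endpoints is a $C_{2k}$, and each $C_{2k}$ arises $O_k(1)$ times this way. This gives $\Omega(N'^2)$ cycles.

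\emph{Regaining the lost vertices.} To recover $N^2$ rather than $N'^2$, I would iterate. Remove the cycles found, or more precisely remove the vertices of $H'$, and repeat while the remaining graph still has $\ge c_0 N^{1+1/k}/2$ edges. Summing $N_i'^2$ over the disjoint pieces, with the edge budget forcing $\sum N_i'^{1+1/k} \gtrsim N^{1+1/k}$, gives $\sum N_i'^2 \gtrsim N^2$ by convexity.

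\textbf{Step 2 (sampling down to the threshold).} Given $G$ with $d := m/n \ge c\, n^{1/k}$, I sample a random vertex subset $S$, including each vertex independently with probability $p$. I choose $p$ so that $G[S]$ sits at the threshold, i.e.\ $p d \approx 2c_0 (pn)^{1/k}$, which gives $p \approx (2c_0 n^{1/k}/d)^{k/(k-1)}$. We have $\E|S| = pn$ and $\E e(G[S]) = p^2 m$, and Chernoff bounds give concentration since $pn \to \infty$. Hence with constant probability $G[S]$ satisfies the hypothesis of Step~1 and so has $\Omega((pn)^2)$ copies of $C_{2k}$. Each $C_{2k}$ of $G$ survives with probability $p^{2k}$, so
\[
t\, p^{2k} \;\ge\; \E\,\#C_{2k}(G[S]) \;\gtrsim\; (pn)^2 .
\]
This gives $t \gtrsim p^{2-2k} n^2$. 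Substituting $p$ yields $p^{2-2k} = \Theta\bigl((d/n^{1/k})^{2k}\bigr)$, hence $t \gtrsim d^{2k} n^{-2} n^2 = d^{2k}$, as desired. One caveat: plain linearity of expectation with the nonlinear threshold condition requires conditioning on the good event. Since the count is nonnegative, it suffices that the good event has constant probability.

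\textbf{Main obstacle.} The hard part is Step~1, and specifically its second stage: showing that typical pairs of $k$-paths between the same endpoints are internally disjoint in a constant fraction of cases. This is where the classical Bondy--Simonovits machinery (layered BFS trees and the ``$\Theta$-graph'' lemma) has to be made quantitative. I would first try the Morris--Saxton-style counting, which shows that too many overlapping path pairs force a dense substructure that is itself rich in $C_{2k}$. I expect a careful but bounded case analysis on where the two paths first meet. The sampling step is routine by comparison; it is essentially the reason the exponent $(m/n)^{2k}$ appears.
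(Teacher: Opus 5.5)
The paper does not prove this lemma; it imports it as a black box from Jiang and Yepremyan. Your sketch therefore has to stand on its own, and as written it does not.

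Your Step~2 is sound and is the standard way to lift a threshold statement to the whole range. Two small corrections. First, $pn$ need not tend to infinity: for $m=\Theta(n^2)$ your choice gives $pn=\Theta(c_0^{k/(k-1)})$, which is only a large constant. Second, $e(G[S])$ is not a sum of independent variables, so use Chebyshev instead of Chernoff. The variance term $p^3\sum_v\deg(v)^2\le 2p^3mn$ is harmless because $pd\approx 2c_0(pn)^{1/k}$ is large.

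The first real gap is ``regaining the lost vertices.'' Regularization only guarantees $N'\ge N^{\gamma}$ for some $\gamma<1$, and neither way of iterating works.
\begin{itemize}
\item If you delete $V(H'_i)$, the pieces are vertex-disjoint. But each deletion can destroy far more than $e(H'_i)$ edges, since vertices of $H'_i$ may have degree up to $N$ in $H$. So there is no edge budget giving $\sum_i (N'_i)^{1+1/k}\gtrsim N^{1+1/k}$.
\item If you delete only $E(H'_i)$, the budget holds, but the pieces may overlap and be small. Then convexity points the wrong way: for a fixed value of $\sum_i (N'_i)^{1+1/k}$, the sum $\sum_i (N'_i)^2$ is \emph{smallest} when there are many small pieces. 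For example, pieces of size $N^{\beta}$ with $\Theta(N^{\beta(1+1/k)})$ edges each yield only $N^{1+1/k+\beta(1-1/k)}\ll N^2$.
\end{itemize}
Nothing prevents most $C_{2k}$'s from using edges of several pieces, so a piece-by-piece count is inherently lossy. Jiang and Yepremyan handle this with a dedicated reduction lemma from general to almost-regular host graphs.

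The second gap is the heart of Step~1. You correctly flag it as the main obstacle, but then settle it by assertion. The dichotomy you describe is not a valid mechanism:
\begin{itemize}
\item the appearance of a shorter even cycle forces no $C_{2k}$ at all;
\item the Bondy--Simonovits theta-graph lemma is a pure existence statement. It says nothing about what fraction of pairs of $k$-paths with common endpoints are internally disjoint.
\end{itemize}
Cauchy--Schwarz gives $\gtrsim\delta^{2k}$ such pairs. What is missing is either a bound showing that the degenerate pairs (those meeting at an internal vertex) are a small fraction of them, or an argument that, when they are not, $\Omega(\delta^{2k})$ copies of $C_{2k}$ arise by other means. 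That is exactly the quantitative content of the theorem.

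So your proposal reduces the lemma to an equivalent threshold statement (your Step~2 shows the two are equivalent) and leaves that statement unproved.
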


Jiang and Yepremyan's result can be used to obtain the following proposition. In what
follows,
the $\lesssim, \gtrsim$ symbols, abbreviating $\Omega$ and $O$, hide (somewhat cumbersome)
functions in $k$. It is not hard to keep track on the dependence on $c,d$ and the
exponential in $k$ factors arising throughout; we hide them for clarity of exposition.

\begin{prop}
Let $G=(A\cup B, E)$ be a bipartite graph with $t$ $C_{2k}$'s, $a=|A|\geq b=|B|$. Then,
$m=|E| \gtrsim  ab^{1/k}$ implies $t \gtrsim\frac{m^{2k}}{ a^k b^k}$.
\label{prop:cycle:count}
\end{prop}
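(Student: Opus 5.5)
The plan is to reduce to the balanced setting of Lemma~\ref{lmm:jiang:yepremyan} by splitting the large side $A$ into pieces of size about $b$. We charge the cycles found inside the pieces back to $G$ through a random-partition averaging argument. If $a\lesssim b$, then Lemma~\ref{lmm:jiang:yepremyan} applies directly with $n=a+b\le 2a$: the hypothesis $m\gtrsim ab^{1/k}\gtrsim n^{1+1/k}$ gives $t\gtrsim (m/n)^{2k}\gtrsim m^{2k}/(a^kb^k)$. So assume $a\ge b$, and set $r:=\lceil a/b\rceil$.

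\emph{Step 1 (random equitable partition).} Take a uniformly random permutation of $A$ and cut it into $r$ consecutive blocks $A_1,\dots,A_r$, each of size at most $b$. Let $G_i:=G[A_i\cup B]$, which has $n_i\le 2b$ vertices and $e_i$ edges with $\sum_i e_i=m$. Every $C_{2k}$ of $G$ uses exactly $k$ distinct vertices of $A$. For a fixed $k$-subset of $A$, the probability that all its vertices land in a common block is $\lesssim r^{-(k-1)}\approx (b/a)^{k-1}$; this is a routine hypergeometric estimate, using that $k$ is constant and each block has size $\Theta(a/r)$. Hence
\[
  \E\Bigl[\textstyle\sum_i t(G_i)\Bigr]\lesssim (b/a)^{k-1}\, t ,
\]
where $t(G_i)$ denotes the number of $C_{2k}$'s in $G_i$.

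\emph{Step 2 (deterministic lower bound for every partition).} Call block $i$ \emph{rich} if $e_i\ge c(2b)^{1+1/k}$. The non-rich blocks carry at most $r\,c(2b)^{1+1/k}\lesssim (a/b)\,b^{1+1/k}=ab^{1/k}$ edges in total. Choosing the hidden constant in $m\gtrsim ab^{1/k}$ large enough makes this at most $m/2$, so the rich blocks carry at least $m/2$ edges. On each rich block, Lemma~\ref{lmm:jiang:yepremyan} gives $t(G_i)\ge d(e_i/n_i)^{2k}\ge d(e_i/2b)^{2k}$. By convexity of $x\mapsto x^{2k}$ (power mean over at most $r$ rich blocks),
\[
  \sum_i t(G_i)\;\gtrsim\; r\Bigl(\frac{m}{2r\cdot 2b}\Bigr)^{2k}\;\gtrsim\;\frac{a}{b}\Bigl(\frac{m}{a}\Bigr)^{2k}.
\]
This holds for every outcome, hence also in expectation.

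\emph{Step 3 (combine).} Comparing the two bounds gives $t\gtrsim (a/b)^{k-1}\cdot (a/b)\,m^{2k}/a^{2k}=m^{2k}/(a^kb^k)$, which is the claim.

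The main point needing care is Step~1, namely getting the $r^{-(k-1)}$ factor with only constant loss. A cruder, deterministic partition loses exactly this factor and yields only $t\gtrsim (a/b)m^{2k}/a^{2k}$, which is too weak. I would first try the equitable random partition as above, since fixed block sizes make $n_i\le 2b$ deterministic. I would then check that the hypergeometric probability is $\Theta(r^{-(k-1)})$ when $a\ge kb$; the range $b\le a<kb$ falls under the balanced case.
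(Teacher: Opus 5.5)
Your proof is correct and follows the paper's route: a random equipartition of $A$ into about $a/b$ blocks of size at most $b$, Lemma~\ref{lmm:jiang:yepremyan} plus convexity on the blocks, and the $\le r^{-(k-1)}$ probability that the $k$ vertices of a cycle lying in $A$ all land in one block. The differences are cosmetic. You discard sub-threshold blocks (which carry at most $m/2$ edges), whereas the paper applies the subtracted bound $T \ge d(M/N)^{2k} - dc^{2k}N^2$ to every block; you use $\lceil a/b\rceil$ blocks where the paper pads $A$ with dummy vertices; and since only the upper bound $\sum_i \binom{|A_i|}{k}/\binom{a}{k} \le (\max_i |A_i|/a)^{k-1} \le (b/a)^{k-1}$ is needed, and it holds for all $a \ge b$, neither your separate balanced case nor the $\Theta$ check is necessary.
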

\begin{proof}
First of all, from Lemma~\ref{lmm:jiang:yepremyan}, the following always holds for any
graph with $N$ nodes, $M$ edges, and $T$ $C_{2k}$ cycles:
\begin{align*}
    T \geq d (M/N)^{2k} - dc^{2k} N^{2},
\end{align*}
where $c,d$ are the $k$-dependent parameters from Lemma~\ref{lmm:jiang:yepremyan}. This is
true because if $M \geq cN^{1+1/k}$, then we have at least $d(M/N)^{2k}$ cycles, and if $M
<cN^{1+1/k}$ the right hand side becomes negative.

Consider for a moment that $a$ is a multiple of $b$. We will drop this assumption later.
We first handle the divisible case by showing that, if $m\geq 4cab^{1/k}$, then
$t \gtrsim \frac{m^{2k}}{a^kb^k}$.

We begin by partitioning $A$ randomly into $r = a/b $ parts $A_1, A_2, \ldots, A_r$ of size
$b$. Let $G_i$ be the bipartite graph induced by $(A_i, B)$; let $m_i$ be the number of
edges in $G_i$, and $X_i$ the number of $C_{2k}$'s in $G_i$. Then, we have
\begin{align*}
    X_i \geq d\left(\frac{m_i}{2b}\right)^{2k} - dc^{2k} (2b)^2.
\end{align*}
Let $X = \sum_{i=1}^r X_i$ be the total number of $C_{2k}$'s in the partition,
then
\begin{align*}
\E[X] \geq \sum_{i=1}^r \left[d\left(\frac{m_i}{2b}\right)^{2k} - dc^{2k} (2b)^2\right]
\geq r \cdot d\left(\frac{\sum_i m_i}{2rb}\right)^{2k} - rdc^{2k} (2b)^2
= d\frac{m^{2k}}{(2b)^{2k} r^{2k-1}} - 4 dc^{2k} ab.
\end{align*}
Given a cycle in $G$, the probability that it is counted in $X$ is $r\binom b k / \binom a k
\leq 1/r^{k-1}$ (there is a factor of $r$ because there are $r$ choices for which
part the cycle lands on). Thus,
\begin{align*}
    \frac{t}{r^{k-1}} \geq \E[X] \geq d\frac{m^{2k}}{(2b)^{2k} r^{2k-1}} - 4dc^{2k} ab
\end{align*}
which implies
\begin{align*}
    t \geq d\frac{m^{2k}}{(2b)^{2k} r^{k}} - 4dc^{2k} abr^{k-1}
    = d\frac{m^{2k}}{4^k a^{k}b^{k}} - 4dc^{2k} \frac{a^{k}}{b^{k-2}}
    \geq \frac{d}{2} \cdot \frac{m^{2k}}{4^k a^k b^k}
\end{align*}

Lastly, to treat the case where $a$ is not a multiple of $b$, we can add $b\cdot  \lceil a / b \rceil - a$ dummy vertices to $A$, which in the worst case could double the size of $A$, and apply the bound in the new graph.
\end{proof}

Our new edge bounding lemma is the following:
\begin{lmm}
Let $G$ be an undirected graph with $t$ ${2k}$-cycles and let \(A,B\subseteq V(G)\).
Then, the number
of edges between $A$ and $B$ is at most
\begin{align}
|E(A,B)| \lesssim \left(|A| \cdot |B|^{1/k} + |B| \cdot |A|^{1/k}\right) + t^{1/(2k)} \cdot \sqrt{|A| \cdot |B|}
\end{align}
\label{lmm:edge:count:listing}
\end{lmm}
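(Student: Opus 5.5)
We reduce the statement to Proposition~\ref{prop:cycle:count} by passing to a bipartite subgraph. We may assume $A$ and $B$ are disjoint. If not, we split into the three pieces $A\setminus B$, $B\setminus A$ and $A\cap B$. The edges inside $A\cap B$ can be handled by a random bipartition of $A\cap B$ into two halves $X,Y$, which retains at least half of those edges in expectation and gives a bipartite pair $(X,Y)$ with $|X|,|Y|\le |A\cap B|\le\min(|A|,|B|)$. Each of the constantly many resulting bipartite pairs $(A',B')$ has $|A'|\le|A|$ and $|B'|\le|B|$, up to swapping roles. The claimed bound is monotone in $|A|$ and $|B|$ and symmetric in them, so it suffices to prove it for disjoint $A,B$ and the bipartite graph $H=(A\cup B, E(A,B))$.

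Let $m=|E(A,B)|$, $a=|A|$, $b=|B|$, and assume by symmetry $a\ge b$. Every $C_{2k}$ in $H$ is a $C_{2k}$ in $G$, so $H$ has at most $t$ copies of $C_{2k}$. We split into two cases. If $m\lesssim ab^{1/k}$ with the constant $4c$ from the proof of Proposition~\ref{prop:cycle:count}, then $m\lesssim |A||B|^{1/k}$, which is one of the first two terms. Otherwise Proposition~\ref{prop:cycle:count} applies and gives $t\gtrsim m^{2k}/(a^kb^k)$. Rearranging, $m\lesssim t^{1/(2k)}(ab)^{1/2}$, which is the last term. In either case the sum bound holds, and the term $|B||A|^{1/k}$ covers the symmetric case $b\ge a$.

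The step needing the most care is the reduction to the disjoint bipartite case. Proposition~\ref{prop:cycle:count} is stated for bipartite graphs, and $A,B$ are arbitrary vertex subsets that may overlap. Edges with both endpoints in $A\cap B$, and edges counted in both directions, must be accounted for. The random halving argument gives a bipartite subgraph with at least half those edges, which costs only a constant factor. We also need to confirm that the hidden constants (those depending on $c$ and $d$) stay uniform across the finitely many pieces. This is routine since there are $O(1)$ pieces. Everything else is a direct rearrangement of the inequality in Proposition~\ref{prop:cycle:count}.
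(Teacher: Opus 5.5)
Your proof is correct and follows the paper's argument. The disjoint case is the same dichotomy: either $m$ is below the threshold of Proposition~\ref{prop:cycle:count}, or the proposition gives $t \gtrsim m^{2k}/(a^kb^k)$, which rearranges to the $t^{1/(2k)}\sqrt{ab}$ term. Your reduction to disjoint sets differs only cosmetically. You use the pairs $(A\setminus B, B\setminus A)$, $(A\setminus B, A\cap B)$, $(A\cap B, B\setminus A)$ and a random bipartition of $A\cap B$, whereas the paper randomly assigns each vertex of $A\cap B$ to one side in a single step, so every edge of $E(A,B)$ survives with probability at least $1/2$; both cost only a constant factor.
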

\begin{proof}
Let \(M:=|E(A,B)|\). First suppose that \(A\cap B=\emptyset\). Let \(H\) be
the bipartite graph with parts \(A,B\) and edge set \(E(A,B)\). Every
\(C_{2k}\) in \(H\) is also a \(C_{2k}\) in \(G\), so \(H\) contains at most
\(t\) such cycles. If \(M\lesssim \left( |A||B|^{1/k}+|B||A|^{1/k} \right)\), we are done.
Otherwise, \(M\) is above the threshold of Proposition~\ref{prop:cycle:count};
hence \(t\geq \Omega(M^{2k}/(|A|^k|B|^k))\), and so
\(M\lesssim t^{1/(2k)}\sqrt{|A||B|}\). Thus the desired bound holds in the
case when $A$ and $B$ are disjoint.

For the general case, let \(I:=A\cap B\). We split the vertices of \(I\)
randomly between the two sides, obtaining disjoint sets \(A'\subseteq A\) and
\(B'\subseteq B\). Each edge of \(E(A,B)\) appears in \(E(A',B')\) with probability at least
\(1/2\). Therefore, for some choice of the split, \(|E(A',B')|\geq |E(A,B)|/2\). Thus,
\[
M
\lesssim
\left(|A||B|^{1/k}
+
|B||A|^{1/k}
\right)
+
t^{1/(2k)}\sqrt{|A||B|}.
\]
\end{proof}

Using this edge bound on the $\Phi$-norm of an adjacency matrix of a graph with bounded degree, we bound
the number of capped $k$-walks:

\begin{lmm}
\label{lmm:capped:k:walks:bound}
Let $G$ be a graph with maximum degree $\Delta$ and $t$ $C_{2k}$'s, and $\bm A$ be its
adjacency matrix. Then,
\[ \phinorm{\bm A} \lesssim
        m^{\frac{1}{2k}} \Delta^{\frac 1 2 - \frac{1}{2k}} +
        t^{\frac{1}{2k}} \log \Delta.
\]
\label{lmm:phi:norm:bound:listing}
\end{lmm}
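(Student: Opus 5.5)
The plan is to follow the proof of Lemma~\ref{lmm:conj:supersaturation:implies:Phi:norm}, using the unconditional Lemma~\ref{lmm:edge:count:listing} in place of the conjecture. By Lemma~\ref{lmm:indicator-phi-norm}, it suffices to show, for every vertex set $S$ with $s:=|S|$,
\[
\phinorm{\bm A\bm 1_S} \lesssim \bigl(m^{\frac1{2k}}\Delta^{\frac12-\frac1{2k}} + t^{\frac1{2k}}\log\Delta\bigr)\sqrt s .
\]
As before, let $B(x):=\{v: |E(v,S)|\ge x\}$ and $f(x):=|B(x)|$, so that $\phinorm{\bm A\bm 1_S}=\int_0^\Delta\sqrt{f(x)}\,dx$. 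The trivial constraints remain available: $xf(x)\le 2m$ and $xf(x)\le s\Delta$. Applying Lemma~\ref{lmm:edge:count:listing} with the pair $(B(x),S)$ gives
\[
xf(x)\le |E(B(x),S)| \lesssim f(x)s^{1/k} + s f(x)^{1/k} + t^{1/(2k)}\sqrt{f(x)s}.
\]

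Let $C$ be the implied constant. For $x\ge 2Cs^{1/k}$ the first term on the right can be absorbed into the left-hand side, so one of the other two terms dominates. This gives two cases.
\begin{itemize}
\item If $xf\lesssim sf^{1/k}$, then $\sqrt{f}\lesssim (s/x)^{k/(2(k-1))}$.
\item If $xf\lesssim t^{1/(2k)}\sqrt{fs}$, then $\sqrt f\lesssim t^{1/(2k)}\sqrt s/x$. Integrating this over $[1,\Delta]$ gives the term $t^{1/(2k)}\sqrt s\log\Delta$.
\end{itemize}
Small $x$ (say $x\le 1$) contributes only $O(\sqrt{s\Delta})$, which is dominated by the target.

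The region $x\le 2Cs^{1/k}$ is handled by the trivial bound $\sqrt f\le\sqrt{\min(2m,s\Delta)/x}$, which integrates to $\lesssim \sqrt{\min(m,s\Delta)}\,s^{1/(2k)}$. There are two subcases.
\begin{itemize}
\item If $s\Delta\le m$, the bound is $\sqrt{s\Delta}\,s^{1/(2k)}$. Since $(s\Delta)^{1/(2k)}\le m^{1/(2k)}$, this is at most $\sqrt s\, m^{1/(2k)}\Delta^{(k-1)/(2k)}$.
\item If $s\Delta>m$, the bound is $\sqrt m\, s^{1/(2k)}$. This equals the target times $\bigl(m/(s\Delta)\bigr)^{(k-1)/(2k)}\le 1$.
\end{itemize}

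The main step, and the one I expect to be most delicate, is the first case at large $x$, where $\sqrt f\lesssim\min\{\sqrt{m/x},(s/x)^{k/(2(k-1))}\}$. Integrating each branch up to $\Delta$ gives two bounds.
\begin{itemize}
\item The first branch gives $\lesssim\sqrt{m\Delta}$.
\item For $k\ge3$ the exponent $k/(2(k-1))$ is below $1$, so the integral is dominated by its upper end, giving $\lesssim s^{k/(2(k-1))}\Delta^{(k-2)/(2(k-1))}$. For $k=2$ an extra $\log\Delta$ appears; I would handle it separately, or note that it is absorbed there.
\end{itemize}
Bounding the minimum of these two by their weighted geometric mean with weights $1/k$ and $(k-1)/k$, the exponents work out to $m^{1/(2k)}\,s^{1/2}\,\Delta^{1/(2k)+(k-2)/(2k)}=\sqrt s\,m^{1/(2k)}\Delta^{(k-1)/(2k)}$, which is exactly the target term. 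Summing all pieces gives the bound on $\phinorm{\bm A\bm 1_S}$ for every $S$, and Lemma~\ref{lmm:indicator-phi-norm} transfers it to $\phinorm{\bm A}$ at the cost of the factor $16$.

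The risk in this step is bookkeeping. The geometric-mean interpolation must be legitimate: it is, since $\min(X,Y)\le X^{\alpha}Y^{1-\alpha}$. The case split at $s^{1/k}$ must also mesh with the integration ranges; if $s^{1/k}>\Delta$, only the small-$x$ region is present, and the argument above already covers it.
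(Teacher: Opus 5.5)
Your argument is correct for $k\ge 3$, but it is organized differently from the paper's proof. The paper discretizes. It bounds $\phinorm{\bm A\bm 1_S}\le 2\sum_j 2^j\sqrt{b_j}$ over dyadic buckets $B_j=\{v: 2^j\le|E(v,S)|<2^{j+1}\}$ and applies Lemma~\ref{lmm:edge:count:listing} to each pair $(B_j,S)$. It then controls the normalized quantity $x=b_jd^2/s$ (with $d=2^j$) algebraically: the $sb^{1/k}$ term is traded against $bd\le m$, the $bs^{1/k}$ term against $bd\lesssim s\Delta$ and $x\le d\Delta$, and the result is resolved via $x\lesssim P/x+Q$. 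The target term then comes from summing a geometric series plus a minimum of an increasing and a decreasing geometric series. The $\log\Delta$ appears only because the $t^{1/(2k)}$ contribution is the same in every bucket.

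You instead keep the continuous layer-cake integral of Lemma~\ref{lmm:conj:supersaturation:implies:Phi:norm}, and you isolate exactly what the asymmetric edge bound costs. The extra term $fs^{1/k}$ forces a threshold at $x\approx s^{1/k}$. Below it, the trivial bound $\min(2m,s\Delta)/x$ suffices, and your two subcases check out. Above it, interpolating $\min(X,Y)\le X^{1/k}Y^{(k-1)/k}$ with $X\approx\sqrt{m\Delta}$ and $Y\approx s^{k/(2(k-1))}\Delta^{(k-2)/(2(k-1))}$ lands exactly on $\sqrt s\,m^{1/(2k)}\Delta^{(k-1)/(2k)}$. Your version makes the parallel with the conditional (conjecture-based) bound transparent. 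The paper's dyadic version is more mechanical but treats all $k\ge 2$ uniformly.

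The one loose end is $k=2$, which the lemma, and the paper's main theorem, also cover. There, replacing $\int\min\{\sqrt{m/x},s/x\}\,dx$ by the minimum of the two separate integrals genuinely loses a $\sqrt{\log\Delta}$ factor, so it is not ``absorbed.'' For instance, $\Delta=m^{1/2}$ and $s=m^{3/4}/\log m$ make $\min(X,Y)$ a factor $\Theta(\sqrt{\log m})$ larger than $\sqrt s\,(m\Delta)^{1/4}$.

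The fix is to integrate the pointwise minimum, splitting at the crossover $x^*=s^2/m$.
\begin{itemize}
\item If $s^2\ge m\Delta$, the branch $\sqrt{m/x}$ alone gives $2\sqrt{m\Delta}\le 2\sqrt s\,(m\Delta)^{1/4}$.
\item Otherwise you get $\lesssim 2s+s\ln w$ with $w=m\Delta/s^2>1$. Since $(2+\ln w)\,w^{-1/4}=O(1)$, this is $\lesssim\sqrt s\,(m\Delta)^{1/4}$, which is the required bound for $k=2$.
\end{itemize}
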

\begin{proof}
By Lemma~\ref{lmm:indicator-phi-norm}, it is sufficient to show that, for
any subset $S$ of vertices of $G$, we have
\begin{align*}
\phinorm{\bm A \bm 1_S}
    \lesssim
        m^{\frac{1}{2k}} \Delta^{\frac 1 2 - \frac{1}{2k}} + t^{\frac{1}{2k}} \log \Delta
    \cdot \sqrt{|S|}.
\end{align*}
Let $E(v,S)$ denote the set of edges between $v$ and $S$. Define
\begin{align}
    f(x) := |\{ v : |E(v,S)| \geq x \}|
\end{align}
Then, our goal is to bound
$\phinorm{\bm A\bm 1_S} = \sum_{x=1}^{\Delta} \sqrt{f(x)}$,
subject to the constraints we know about $f$.
We will write down the constraints later, but first we break this sum in to
dyadic intervals.
For $i \in \{0, 1, \ldots, \lceil \log \Delta \rceil\}$, define
\begin{align*}
    B_i & := \{ v : 2^i \leq |E(v,S)| < 2^{i+1} \} \\
    b_i & := |B_i|.
\end{align*}
Then, because $f(x)$ is non-increasing in $x$, we have
\begin{align*}
    \phinorm{\bm A \bm 1_S}
    &= \sum_{i=0}^{\lceil \log \Delta \rceil} \sum_{x=2^i}^{2^{i+1}-1} \sqrt{f(x)} \\
    &\leq \sum_{i=0}^{\lceil \log \Delta \rceil} (2^{i+1}-2^i) \sqrt{f(2^i)} \leq \sum_{i=0}^{\lceil \log \Delta \rceil} 2^i \sqrt{f(2^i)} \\
    &\leq \sum_{i=0}^{\lceil \log \Delta \rceil} 2^i \sqrt{\sum_{j=i}^{\lceil \log \Delta \rceil} b_j} \leq \sum_{i=0}^{\lceil \log \Delta \rceil} 2^i \sum_{j=i}^{\lceil \log \Delta \rceil} \sqrt{b_j} \\
    &= \sum_{j=0}^{\lceil \log \Delta \rceil} \sqrt{b_j} \sum_{i=0}^{j} 2^i \leq 2 \sum_{j=0}^{\lceil \log \Delta \rceil} 2^j \sqrt{b_j}
\end{align*}
So our next job is to bound the quantity $2^j \sqrt{b_j}$ for each $j$. To do this, fix a $j
\in \{ 0, 1, \ldots, \lceil \log \Delta \rceil\}$.
To shorten notations, let $B = B_j$, $b = b_j$, $s = |S|$, and $d = 2^j$. Then, we
have the following constraints:
\begin{align*}
    |E(B, S)| &\lesssim \left( sb^{1/k} + bs^{1/k} \right) + t^{1/(2k)} \cdot \sqrt{sb} \\
    |E(B, S)| &\lesssim s\Delta \\
    bd \lesssim |E(B, S)| &\leq m.
\end{align*}

Define $x = bd^2/s$, then we have the following bounds on $x$:
\begin{align*}
    x &= \frac{bd^2}{s} = d \frac{bd}{s} \leq d \frac{s\Delta}{s} = d \Delta \\
    x &= \frac{bd^2}{s} = \frac{(bd)^2}{bs}\\
    &\lesssim \frac{\left (sb^{1/k} + bs^{1/k} + t^{1/(2k)} \cdot \sqrt{sb}\right )^2}{bs} \\
    &\lesssim \frac{(sb^{1/k})^2 + (bs^{1/k})^2 + (t^{1/(2k)} \cdot \sqrt{sb})^2}{bs} \\
    &= \left( s b^{2/k - 1} +  bs^{2/k - 1} \right) + t^{1/k}\\
    &= \left(\frac{bd^2}{x} b^{2/k - 1} + \left(\frac{b}{s}\right)^{1-2/k}b^{2/k}\right) + t^{1/k}\\
    &\lesssim \left(\frac{d^2b^{2/k}}{x} +  \left(\frac{\Delta}{d}\right)^{1-2/k} \left(\frac m d\right)^{2/k}\right) + t^{1/k}\\
    &= \left( \frac{d^2b^{2/k}}{x} + \frac{\Delta^{1-2/k} m^{2/k}}{d}\right) + t^{1/k}.
\end{align*}

Now we use the fact that $x \leq \frac P x + Q$ implies $x = O(P^{1/2} + Q)$ to obtain that
\begin{align*}
    x &\lesssim \sqrt{d^2b^{2/k}} + \frac{\Delta^{1-2/k} m^{2/k}}{d} + t^{1/k} \\
    &= d b^{1/k} + \frac{\Delta^{1-2/k} m^{2/k}}{d} + t^{1/k}\\
    &\lesssim (db)^{1/k}d^{1-1/k} + \frac{\Delta^{1-2/k} m^{2/k}}{d} + t^{1/k}\\
    &\lesssim m^{1/k} d^{1-1/k} + \frac{\Delta^{1-2/k} m^{2/k}}{d} + t^{1/k}
\end{align*}

Along with the fact that $x \lesssim d\Delta$, we have
\begin{align*}
    x &\lesssim \min \left\{m^{1/k} d^{1-1/k} + \frac{\Delta^{1-2/k} m^{2/k}}{d} + t^{1/k}, d \Delta  \right\} \\
    &\lesssim m^{1/k} d^{1-1/k} + \min \left\{\frac{\Delta^{1-2/k} m^{2/k}}{d}, d \Delta  \right\} + t^{1/k}.
\end{align*}

This implies
\begin{align*}
    \frac{2^j \sqrt{b_j}}{\sqrt s} = \frac{d \sqrt{b}}{\sqrt{s}} =\sqrt x \lesssim
    m^{\frac{1}{2k}} 2^{j\frac{k-1}{2k}} + \min \left\{\frac{\Delta^{\frac{k-2}{2k}} m^{1/k}}{2^{j/2}}, 2^{j/2} \sqrt \Delta  \right\} + t^{\frac{1}{2k}}.
\end{align*}
We bound the sum over $j$ of each of the first two terms separately. For the first term we have
\begin{align*}
    m^{1/2k} \sum_{j=0}^{\lceil \log \Delta \rceil} 2^{j(\frac{k-1}{2k})}
    & \frac{2^{(\lceil \log \Delta \rceil + 1)(\frac{k-1}{2k})} - 1}{2^{\frac{k-1}{2k}} - 1} \\
    &\lesssim m^{1/2k} 2^{\lceil \log \Delta \rceil (\frac{k-1}{2k})} \\
    &\lesssim m^{1/2k} \Delta^{\frac{k-1}{2k}}.
\end{align*}
For the second term, we need to find the value of $j \in \{0, 1, \ldots, \lceil \log \Delta
\rceil\}$ that transitions between the two cases in the minimum. In particular,
\[
\frac{\Delta^{\frac{k-2}{2k}} m^{1/k}}{2^{j/2}} \leq 2^{j/2} \sqrt \Delta
\qquad \text{ if and only if } \qquad
j \geq j^* := \lceil \log_2 ( m/\Delta)^{1/k}  \rceil.
\]
Hence, let's break the sum into two parts:
\begin{align*}
    \sum_{j=0}^{\lceil \log \Delta \rceil} \min \left\{ \frac{\Delta^{\frac{k-2}{2k}} m^{1/k}}{2^{j/2}}, 2^{j/2} \sqrt \Delta  \right\}
    &= \sum_{j=0}^{j^*} 2^{j/2} \sqrt \Delta + \sum_{j=j^*+1}^{\lceil \log \Delta \rceil} \frac{\Delta^{\frac{k-2}{2k}} m^{1/k}}{2^{j/2}} \\
    &\lesssim 2^{j^*/2} \sqrt \Delta + \frac{\Delta^{\frac{k-2}{2k}} m^{1/k}}{2^{j^*/2}} \\
    &\lesssim (m/\Delta)^{1/(2k)} \sqrt \Delta + \frac{\Delta^{\frac{k-2}{2k}} m^{1/k}}{(m/\Delta)^{1/(2k)}} \\
    &\lesssim m^{1/(2k)} \Delta^{\frac{k-1}{2k}}.
\end{align*}

And thus,
$\sum_j 2^j \sqrt{b_j}$ has an extra $\log \Delta$ factor only on the $t^{1/(2k)}$ term, yielding the desired bound.
\end{proof}

Next, we prove the analog of Lemma 1.7 of ~\cite{DBLP:conf/stoc/DahlgaardKS17} for the
listing problem. Recall the quantity $\kappa_k(G)$ defined in
Section~\ref{subsec:color:coding}, which is the maximum number of capped $k$-walks starting
from a vertex in $G$.
\begin{lmm}
    Let $G$ be an undirected graph with maximum degree $\Delta$ and $t$ $C_{2k}$'s.
    Then,
    \begin{align*}
        \kappa_k(G) \lesssim
            m^{\frac 3 2 - \frac{1}{2k}} \Delta^{\frac{(k-1)^2}{2k}} + m^{\frac{2k}{k+1}} (\log\Delta)^{\frac{2k(k-1)}{k+1}} + t
    \end{align*}
\end{lmm}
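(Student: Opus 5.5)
We combine Lemma~\ref{lmm:capped-walks} with the $\Phi$-norm bound of Lemma~\ref{lmm:phi:norm:bound:listing}, then split the resulting expression with elementary inequalities and Young's inequality, as in the proof of the conditional proposition in Section~\ref{sec:warm:ups}.

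First, each matrix $\bm A_p$ in Lemma~\ref{lmm:capped-walks} is the adjacency matrix of an induced subgraph $G_p$ of $G$. So $G_p$ has maximum degree at most $\Delta$, at most $m$ edges, and at most $t$ copies of $C_{2k}$, because every $2k$-cycle of an induced subgraph is a $2k$-cycle of $G$. Applying Lemma~\ref{lmm:phi:norm:bound:listing} to each $G_p$ gives, uniformly in $p$,
\[
\max_p \phinorm{\bm A_p} \lesssim m^{\frac{1}{2k}}\Delta^{\frac{k-1}{2k}} + t^{\frac{1}{2k}}\log\Delta .
\]
Lemma~\ref{lmm:capped-walks} then yields
\[
\kappa_k(G) \lesssim m\Bigl(m^{\frac{1}{2k}}\Delta^{\frac{k-1}{2k}} + t^{\frac{1}{2k}}\log\Delta\Bigr)^{k-1}.
\]
Since $k$ is a constant, $(a+b)^{k-1}\le 2^{k-2}(a^{k-1}+b^{k-1})$, so
\[
\kappa_k(G) \lesssim m^{1+\frac{k-1}{2k}}\Delta^{\frac{(k-1)^2}{2k}} + m\,t^{\frac{k-1}{2k}}(\log\Delta)^{k-1}.
\]
The first term equals $m^{\frac32-\frac1{2k}}\Delta^{\frac{(k-1)^2}{2k}}$, which is exactly the first term of the claimed bound.

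Second, the mixed term $m\,t^{\frac{k-1}{2k}}(\log\Delta)^{k-1}$ must be traded for $t$ plus a term independent of $t$. Apply weighted Young's inequality $xy \le x^{p}/p + y^{q}/q$ with $y = t^{\frac{k-1}{2k}}$ and $q = \frac{2k}{k-1}$, so that $y^q = t$. The conjugate exponent is $p = \frac{2k}{k+1}$. This gives
\[
m\,t^{\frac{k-1}{2k}}(\log\Delta)^{k-1} \lesssim \bigl(m(\log\Delta)^{k-1}\bigr)^{\frac{2k}{k+1}} + t = m^{\frac{2k}{k+1}}(\log\Delta)^{\frac{2k(k-1)}{k+1}} + t ,
\]
which matches the second and third terms of the statement. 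Summing the two parts completes the proof.

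The main point to get right is the first step: Lemma~\ref{lmm:phi:norm:bound:listing} must apply to every $\bm A_p$ with the global parameters $m$, $\Delta$, $t$. This holds because all three quantities are monotone under passing to induced subgraphs. Everything after that is bookkeeping of exponents. In particular, the hidden constants depending on $k$ from $(a+b)^{k-1}$ and from Young's inequality are absorbed into $\lesssim$.
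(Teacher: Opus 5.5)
Your proposal is correct and follows the paper's proof essentially step for step: you plug the $\Phi$-norm bound for bounded-degree graphs with $t$ copies of $C_{2k}$ into the capped-walk bound $\kappa_k(G)\lesssim m\,(\max_p\phinorm{\bm A_p})^{k-1}$, expand the $(k-1)$-th power, and apply Young's inequality with exponents $\frac{2k}{k+1}$ and $\frac{2k}{k-1}$ to the mixed term $m\,t^{\frac{k-1}{2k}}(\log\Delta)^{k-1}$. Your remark that the $\Phi$-norm lemma applies to every induced subgraph $G_p$ with the global parameters $m,\Delta,t$, because the bound is monotone in each of them, makes explicit a step the paper leaves implicit.
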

\begin{proof}

Let $\bm A_p$ be the adjacency matrix of the graph induced by the $\bar{V}_p$. Using Lemma~\ref{lmm:capped:k:walks:bound} we have
\begin{align*}
\kappa_k(G)
    &\lesssim m \cdot \max_{p} \phinorm{\bm A_p}^{k-1}\\
    &\lesssim m \left(
        m^{\frac{1}{2k}} \Delta^{\frac 1 2 - \frac{1}{2k}} + t^{\frac{1}{2k}}
        \log \Delta
   \right)^{k-1} \\
    &\lesssim m \cdot
        (m^{\frac{k-1}{2k}} \Delta^{\frac{(k-1)^2}{2k}} + t^{\frac{k-1}{2k}} \log^{k-1}\Delta)\\
    &=
        m^{\frac{3k-1}{2k}} \Delta^{\frac{(k-1)^2}{2k}} + mt^{\frac{k-1}{2k}} \log^{k-1}\Delta\\
    (\text{Young's inequality})
    &\lesssim
        \left(
            m^{\frac 3 2 - \frac{1}{2k}} \Delta^{\frac{(k-1)^2}{2k}} + m^{\frac{2k}{k+1}} (\log\Delta)^{2k(k-1)/(k+1)} + t
        \right)
\end{align*}
\end{proof}

Finally, we prove the main result of this paper, that we can list $2k$-cycles in undirected graphs more efficiently than
what the AYZ-time can achieve:
\begin{thm}
Let $G$ be an undirected graph with $m$ edges and $n$ vertices.
We can list all $2k$-cycles in $G$ in time
$\lesssim m^{\frac{2k^2-k+1}{k^2+1}}\log n  + t \log n $, where $t$ is the number
of $2k$-cycles.
\end{thm}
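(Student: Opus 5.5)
The proof combines the heavy/light split of Section~\ref{subsec:td}, the color-coding rewriting of Section~\ref{subsec:color:coding}, and the capped-walk bound of the preceding lemma, and then optimizes the degree threshold $\Delta$.

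\textbf{Rewriting.} First apply color-coding to replace the $C_{2k}$-listing query by $r = 2^{O(k)}\log n$ closed-walk queries, each carrying unary color filters $C_{\ell,i}$. Every answer of each filtered query is a genuine $2k$-cycle, and each cycle is reported $O(1)$ times per color class. So the total output over all $r$ instances is $O(t\log n)$. This is where the $\log n$ factors on both terms of the bound come from.

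\textbf{Heavy plans.} Fix a threshold $\Delta$ and handle each instance with the multi-TD plan. For the $2k$ plans where some $v_i$ is heavy (degree $>\Delta$), there are at most $2m/\Delta$ heavy vertices. Every bag of these plans therefore has size $O(m^2/\Delta)$, and Yannakakis' algorithm runs in $O(m^2/\Delta + t)$ per instance.

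\textbf{Light plans.} For the all-light part, use the Dahlgaard et al.\ rewriting into $2k\lceil\log\Delta\rceil$ sub-queries anchored at the highest-degree bucket. Each bag of the two-bag TD is bounded by the number of capped $k$-walks in the subgraph induced on light vertices. That subgraph has maximum degree $\Delta$, at most $m$ edges, and at most $t$ $2k$-cycles. The preceding lemma applies to it and gives
\[
\kappa_k \lesssim m^{\frac32-\frac1{2k}}\Delta^{\frac{(k-1)^2}{2k}} + m^{\frac{2k}{k+1}}(\log\Delta)^{\frac{2k(k-1)}{k+1}} + t .
\]
The $\log\Delta$ buckets contribute only a constant-in-$k$ (times $\log$) multiplicative overhead. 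I expect this to be absorbed, possibly by summing geometric degree levels as in Lemma~\ref{lmm:capped-walks}, since $\kappa_k$ already counts walks over all levels $p$.

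\textbf{Balancing.} Set $m^2/\Delta = m^{\frac32-\frac1{2k}}\Delta^{\frac{(k-1)^2}{2k}}$. This gives
\[
\Delta^{\frac{(k-1)^2+2k}{2k}} = \Delta^{\frac{k^2+1}{2k}} = m^{\frac{k+1}{2k}},
\]
so $\Delta = m^{\frac{k+1}{k^2+1}}$. The running time is then
\[
m^{2-\frac{k+1}{k^2+1}} = m^{\frac{2k^2-k+1}{k^2+1}} .
\]

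\textbf{Main obstacle.} The step needing care is the middle term $m^{2k/(k+1)}\,\mathrm{polylog}$. I must check that it is dominated by $m^{(2k^2-k+1)/(k^2+1)}$ with room to absorb the polylog factor into $\log n$.

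The exponent comparison holds:
\[
(2k^2-k+1)(k+1) - 2k(k^2+1) = k^2-k+1 > 0 .
\]
So $m^{2k/(k+1)}$ is polynomially smaller, and for $m$ large the term $m^{2k/(k+1)}(\log m)^{O(k)}$ is $O(m^{(2k^2-k+1)/(k^2+1)})$ with a $k$-dependent constant.

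The remaining work is bookkeeping:
\begin{itemize}
\item confirm that the Young's-inequality step leaves only $O(t)$ additive output cost per instance;
\item confirm that the degree-bucket disjunction does not multiply $t$ by more than a constant, since each cycle has a unique highest-degree bucket and $O(k)$ anchor positions.
\end{itemize}
Summing over the $r$ color instances then yields the claimed $m^{\frac{2k^2-k+1}{k^2+1}}\log n + t\log n$.
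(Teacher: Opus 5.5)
Your proposal is correct and follows essentially the same route as the paper's proof: color-coding into $2^{O(k)}\log n$ closed-walk instances, the AYZ heavy/light split with $O(m^2/\Delta)$ heavy bags, the preceding capped-walk lemma for the light part, balancing at $\Delta = m^{(k+1)/(k^2+1)}$, and absorbing the $m^{2k/(k+1)}\,\mathrm{polylog}$ term. Your bookkeeping on the degree buckets and output multiplicities is, if anything, more explicit than the paper's. One small arithmetic slip: $(2k^2-k+1)(k+1)-2k(k^2+1)=(k-1)^2$, not $k^2-k+1$, but this is still positive for $k\ge 2$, so your domination argument stands.
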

\begin{proof}
From the AYZ approach described in Section~\ref{subsec:td} and Section~\ref{subsec:color:coding},
call a vertex ``heavy'' if its degree is more than
$\Delta$. The number of heavy vertices is at most $2m/\Delta$; hence, the heavy cycles
can be listed in time $\tilde O(m^2/\Delta + t)$.
The cycles involving light vertices can be listed in time $\tilde O(\kappa_k(G) + t)$, which is
\[ \widetilde O(
    m^{\frac 3 2 - \frac{1}{2k}} \Delta^{\frac{(k-1)^2}{2k}} + m^{\frac{2k}{k+1}} (\log\Delta)^{\frac{2k(k-1)}{k+1}} + t
). \]
We balance the two terms by setting $\Delta = m^{\frac{k+1}{k^2+1}}$, yielding a runtime of
$\widetilde O(m^{\frac{2k^2-k+1}{k^2+1}} + t)$.
The poly-log factor is absorbed because
$m^{\frac{2k^2-k+1}{k^2+1}}$ dominates $m^{\frac{2k}{k+1}} (\log\Delta)^{\frac{2k(k-1)}{k+1}}$.
\end{proof}

\section{Conclusions}
\label{sec:conclusions}
We presented the first improvement to the AYZ even cycle listing result in 30 years.
Building on the capped-$k$-walk framework of Dahlgaard, Knudsen, and
St{\"{o}}ckel~\cite{DBLP:conf/stoc/DahlgaardKS17} and the supersaturation result of Jiang
and Yepremyan~\cite{DBLP:journals/cpc/JiangY20}, we proved that all $2k$-cycles in an
undirected $m$-edge graph can be listed in time $O\!\left(m^{(2k^2-k+1)/(k^2+1)} +
t\right)$, where $t$ is the output size. For $k = 3$, our algorithm matches the exponent of
Vassilevska~Williams and Westover~\cite{DBLP:conf/innovations/WilliamsW25} while avoiding
large polylogarithmic factors, computer-assisted case analysis, and BFS-based subroutines.
Our algorithms are expressed as query plans over multiple tree decompositions, using only
join and project operators, making them amenable to efficient implementation in database
systems.

Several open problems remain.

\paragraph{Tightness of the exponent.}
The exponent $(2k^2-k+1)/(k^2+1)$ is unlikely to be optimal for all $k$. For $k = 2$, the
tight bound for $4$-cycle listing is
$\Theta(m^{4/3})$~\cite{DBLP:conf/stoc/JinX23,DBLP:conf/fsttcs/AbboudKLS23}, whereas our
formula gives $m^{7/5}$.

\paragraph{More general conjunctive queries.}
Our techniques apply specifically to even cycle queries, which involve a single symmetric
binary relation (the edge relation of an undirected graph). A natural and important open
problem is to generalize these ideas to broader classes of conjunctive queries where some of
the relations are symmetric and there are a lot of self-joins. The symmetry of the edge
relation was exploited crucially in our analysis --- through the Bondy--Simonovits theorem
and the capped-$k$-walk argument --- and it is unclear how to extend these ideas to
asymmetric or higher-arity settings.

\paragraph{Supersaturation beyond graphs.}
The key technical ingredient in our work is a supersaturation result for bipartite graphs,
derived from the Jiang--Yepremyan theorem.
A deeper open question is whether analogous supersaturation results hold for hypergraph
patterns or, more generally, for relations beyond binary ones.
Such results would be a prerequisite for extending our approach to a richer class of
conjunctive queries, and may be of independent interest in extremal combinatorics.

\bibliographystyle{siam}
\bibliography{main}

@article{ErdosSimonovits1983,
  author    = {Paul Erd{\H{o}}s and Mikl{\'o}s Simonovits},
  title     = {Supersaturated graphs and hypergraphs},
  journal   = {Combinatorica},
  volume    = {3},
  number    = {2},
  pages     = {181--192},
  year      = {1983},
  doi       = {10.1007/BF02579292}
}

@book{Diestel2017,
  author    = {Reinhard Diestel},
  title     = {Graph Theory},
  edition   = {5th},
  series    = {Graduate Texts in Mathematics},
  volume    = {173},
  publisher = {Springer},
  year      = {2017},
  doi       = {10.1007/978-3-662-53622-3}
}

@inproceedings{DBLP:conf/innovations/Lincoln020,
  author       = {Andrea Lincoln and
                  Nikhil Vyas},
  editor       = {Thomas Vidick},
  title        = {Algorithms and Lower Bounds for Cycles and Walks: Small Space and
                  Sparse Graphs},
  booktitle    = {11th Innovations in Theoretical Computer Science Conference, {ITCS}
                  2020, Seattle, Washington, USA, January 12-14, 2020},
  series       = {LIPIcs},
  pages        = {11:1--11:17},
  publisher    = {Schloss Dagstuhl - Leibniz-Zentrum f{\"{u}}r Informatik},
  year         = {2020},
  url          = {https://doi.org/10.4230/LIPIcs.ITCS.2020.11},
  doi          = {10.4230/LIPICS.ITCS.2020.11},
  bibsource    = {dblp computer science bibliography, https://dblp.org}
}

@inproceedings{DBLP:conf/icdt/KhamisNOS19,
  author       = {Mahmoud Abo Khamis and
                  Hung Q. Ngo and
                  Dan Olteanu and
                  Dan Suciu},
  editor       = {Pablo Barcel{\'{o}} and
                  Marco Calautti},
  title        = {Boolean Tensor Decomposition for Conjunctive Queries with Negation},
  booktitle    = {22nd International Conference on Database Theory, {ICDT} 2019, Lisbon,
                  Portugal, March 26-28, 2019},
  series       = {LIPIcs},
  pages        = {21:1--21:19},
  publisher    = {Schloss Dagstuhl - Leibniz-Zentrum f{\"{u}}r Informatik},
  year         = {2019},
  url          = {https://doi.org/10.4230/LIPIcs.ICDT.2019.21},
  doi          = {10.4230/LIPICS.ICDT.2019.21},
  bibsource    = {dblp computer science bibliography, https://dblp.org}
}

@article{color-coding,
  author    = {Noga Alon and
               Raphael Yuster and
               Uri Zwick},
  title     = {Color-Coding},
  journal   = {J. ACM},
  volume    = {42},
  number    = {4},
  year      = {1995},
  pages     = {844-856},
  ee        = {db/journals/jacm/AlonYZ95.html},
  bibsource = {DBLP, http://dblp.uni-trier.de}
}

@inproceedings{VassilevskaWilliams2018ICM,
  author    = {Virginia Vassilevska Williams},
  title     = {On some fine-grained questions in algorithms and complexity},
  booktitle = {Proceedings of the International Congress of Mathematicians (ICM 2018)},
  editor    = {Boyan Sirakov and Paulo Ney de Souza and Marcelo Viana},
  publisher = {World Scientific},
  address   = {Rio de Janeiro, Brazil},
  volume    = {3: Invited Lectures},
  pages     = {3447--3487},
  year      = {2018},
  doi       = {10.1142/9789813272880_0188},
  url       = {https://people.csail.mit.edu/virgi/eccentri.pdf}
}

@article{DBLP:journals/comgeo/GajentaanO12,
  author       = {Anka Gajentaan and
                  Mark H. Overmars},
  title        = {On a class of O(n\({}^{\mbox{2}}\)) problems in computational geometry},
  journal      = {Comput. Geom.},
  volume       = {45},
  number       = {4},
  pages        = {140--152},
  year         = {2012},
  url          = {https://doi.org/10.1016/j.comgeo.2011.11.006},
  doi          = {10.1016/J.COMGEO.2011.11.006},
  bibsource    = {dblp computer science bibliography, https://dblp.org}
}

@inproceedings{DBLP:conf/stoc/AbboudBDN18,
  author       = {Amir Abboud and
                  Karl Bringmann and
                  Holger Dell and
                  Jesper Nederlof},
  editor       = {Ilias Diakonikolas and
                  David Kempe and
                  Monika Henzinger},
  title        = {More consequences of falsifying {SETH} and the orthogonal vectors
                  conjecture},
  booktitle    = {Proceedings of the 50th Annual {ACM} {SIGACT} Symposium on Theory
                  of Computing, {STOC} 2018, Los Angeles, CA, USA, June 25-29, 2018},
  pages        = {253--266},
  publisher    = {{ACM}},
  year         = {2018},
  url          = {https://doi.org/10.1145/3188745.3188938},
  doi          = {10.1145/3188745.3188938},
  bibsource    = {dblp computer science bibliography, https://dblp.org}
}

@article{theoretics:13722,
    title      = {{PANDA: Query Evaluation in Submodular Width}},
    author     = {Mahmoud {Abo Khamis} and Hung Q. Ngo and Dan Suciu},
    url        = {https://theoretics.episciences.org/13722},
    doi        = {10.46298/theoretics.25.12},
    journal    = {TheoretiCS},
    issn       = {2751-4838},
    volume     = {Volume 4},
    eid        = 12,
    year       = {2025},
    month      = {Apr},
}

@book{DBLP:books/aw/AbiteboulHV95,
  author       = {Serge Abiteboul and
                  Richard Hull and
                  Victor Vianu},
  title        = {Foundations of Databases},
  publisher    = {Addison-Wesley},
  year         = {1995},
  url          = {http://webdam.inria.fr/Alice/},
  isbn         = {0-201-53771-0},
  bibsource    = {dblp computer science bibliography, https://dblp.org}
}

@article{DBLP:journals/talg/GroheM14,
  author       = {Martin Grohe and
                  D{\'{a}}niel Marx},
  title        = {Constraint Solving via Fractional Edge Covers},
  journal      = {{ACM} Trans. Algorithms},
  volume       = {11},
  number       = {1},
  pages        = {4:1--4:20},
  year         = {2014},
  url          = {https://doi.org/10.1145/2636918},
  doi          = {10.1145/2636918},
  bibsource    = {dblp computer science bibliography, https://dblp.org}
}

@inproceedings{DBLP:conf/pods/GottlobGLS16,
  author    = {Georg Gottlob and
               Gianluigi Greco and
               Nicola Leone and
               Francesco Scarcello},
  editor    = {Tova Milo and
               Wang{-}Chiew Tan},
  title     = {Hypertree Decompositions: Questions and Answers},
  booktitle = {Proceedings of the 35th {ACM} {SIGMOD-SIGACT-SIGAI} Symposium on Principles
               of Database Systems, {PODS} 2016, San Francisco, CA, USA, June 26
               - July 01, 2016},
  pages     = {57--74},
  publisher = {{ACM}},
  year      = {2016},
  url       = {https://doi.org/10.1145/2902251.2902309},
  doi       = {10.1145/2902251.2902309},
  bibsource = {dblp computer science bibliography, https://dblp.org}
}

@inproceedings{DBLP:conf/pods/KhamisNR16,
  author       = {Mahmoud {Abo Khamis} and
                  Hung Q. Ngo and
                  Atri Rudra},
  editor       = {Tova Milo and
                  Wang{-}Chiew Tan},
  title        = {{FAQ:} Questions Asked Frequently},
  booktitle    = {Proceedings of the 35th {ACM} {SIGMOD-SIGACT-SIGAI} Symposium on Principles
                  of Database Systems, {PODS} 2016, San Francisco, CA, USA, June 26
                  - July 01, 2016},
  pages        = {13--28},
  publisher    = {{ACM}},
  year         = {2016},
  url          = {https://doi.org/10.1145/2902251.2902280},
  doi          = {10.1145/2902251.2902280},
  bibsource    = {dblp computer science bibliography, https://dblp.org}
}

@inproceedings{DBLP:conf/stoc/BringmannG25,
  author       = {Karl Bringmann and
                  Egor Gorbachev},
  editor       = {Michal Kouck{\'{y}} and
                  Nikhil Bansal},
  title        = {A Fine-Grained Classification of Subquadratic Patterns for Subgraph
                  Listing and Friends},
  booktitle    = {Proceedings of the 57th Annual {ACM} Symposium on Theory of Computing,
                  {STOC} 2025, Prague, Czechia, June 23-27, 2025},
  pages        = {2145--2156},
  publisher    = {{ACM}},
  year         = {2025},
  url          = {https://doi.org/10.1145/3717823.3718141},
  doi          = {10.1145/3717823.3718141},
  bibsource    = {dblp computer science bibliography, https://dblp.org}
}

@inproceedings{DBLP:conf/fsttcs/AbboudKLS23,
  author       = {Amir Abboud and
                  Seri Khoury and
                  Oree Leibowitz and
                  Ron Safier},
  editor       = {Patricia Bouyer and
                  Srikanth Srinivasan},
  title        = {Listing 4-Cycles},
  booktitle    = {43rd {IARCS} Annual Conference on Foundations of Software Technology
                  and Theoretical Computer Science, {FSTTCS} 2023, {IIIT} Hyderabad,
                  Telangana, India, December 18-20, 2023},
  series       = {LIPIcs},
  pages        = {25:1--25:16},
  publisher    = {Schloss Dagstuhl - Leibniz-Zentrum f{\"{u}}r Informatik},
  year         = {2023},
  url          = {https://doi.org/10.4230/LIPIcs.FSTTCS.2023.25},
  doi          = {10.4230/LIPICS.FSTTCS.2023.25},
  bibsource    = {dblp computer science bibliography, https://dblp.org}
}

@inproceedings{DBLP:conf/sosa/JinWZ24,
  author       = {Ce Jin and
                  Virginia {Vassilevska Williams} and
                  Renfei Zhou},
  editor       = {Merav Parter and
                  Seth Pettie},
  title        = {Listing 6-Cycles},
  booktitle    = {2024 Symposium on Simplicity in Algorithms, {SOSA} 2024, Alexandria,
                  VA, USA, January 8-10, 2024},
  pages        = {19--27},
  publisher    = {{SIAM}},
  year         = {2024},
  url          = {https://doi.org/10.1137/1.9781611977936.3},
  doi          = {10.1137/1.9781611977936.3},
  bibsource    = {dblp computer science bibliography, https://dblp.org}
}

@article{bondy1974cycles,
  title={Cycles of even length in graphs},
  author={Bondy, John Adrian and Simonovits, Mikl{\'o}s},
  journal={Journal of Combinatorial Theory, Series B},
  volume={16},
  number={2},
  pages={97--105},
  year={1974},
  publisher={Elsevier},
  doi={10.1016/0095-8956(74)90052-5}
}

@article{DBLP:journals/cpc/JiangY20,
  author       = {Tao Jiang and
                  Liana Yepremyan},
  title        = {Supersaturation of even linear cycles in linear hypergraphs},
  journal      = {Comb. Probab. Comput.},
  volume       = {29},
  number       = {5},
  pages        = {698--721},
  year         = {2020},
  url          = {https://doi.org/10.1017/S0963548320000206},
  doi          = {10.1017/S0963548320000206},
  bibsource    = {dblp computer science bibliography, https://dblp.org}
}

@inproceedings{DBLP:conf/stoc/DahlgaardKS17,
  author       = {S{\o}ren Dahlgaard and
                  Mathias B{\ae}k Tejs Knudsen and
                  Morten St{\"{o}}ckel},
  editor       = {Hamed Hatami and
                  Pierre McKenzie and
                  Valerie King},
  title        = {Finding even cycles faster via capped k-walks},
  booktitle    = {Proceedings of the 49th Annual {ACM} {SIGACT} Symposium on Theory
                  of Computing, {STOC} 2017, Montreal, QC, Canada, June 19-23, 2017},
  pages        = {112--120},
  publisher    = {{ACM}},
  year         = {2017},
  url          = {https://doi.org/10.1145/3055399.3055459},
  doi          = {10.1145/3055399.3055459},
  bibsource    = {dblp computer science bibliography, https://dblp.org}
}

@inproceedings{DBLP:conf/stoc/JinX23,
  author       = {Ce Jin and
                  Yinzhan Xu},
  editor       = {Barna Saha and
                  Rocco A. Servedio},
  title        = {Removing Additive Structure in 3SUM-Based Reductions},
  booktitle    = {Proceedings of the 55th Annual {ACM} Symposium on Theory of Computing,
                  {STOC} 2023, Orlando, FL, USA, June 20-23, 2023},
  pages        = {405--418},
  publisher    = {{ACM}},
  year         = {2023},
  url          = {https://doi.org/10.1145/3564246.3585157},
  doi          = {10.1145/3564246.3585157},
  bibsource    = {dblp computer science bibliography, https://dblp.org}
}

@inproceedings{DBLP:conf/innovations/WilliamsW25,
  author       = {Virginia {Vassilevska Williams} and
                  Alek Westover},
  editor       = {Raghu Meka},
  title        = {Listing 6-Cycles in Sparse Graphs},
  booktitle    = {16th Innovations in Theoretical Computer Science Conference, {ITCS}
                  2025, January 7-10, 2025, Columbia University, New York, NY, {USA}},
  series       = {LIPIcs},
  volume       = {325},
  pages        = {92:1--92:21},
  publisher    = {Schloss Dagstuhl - Leibniz-Zentrum f{\"{u}}r Informatik},
  year         = {2025},
  url          = {https://doi.org/10.4230/LIPIcs.ITCS.2025.92},
  doi          = {10.4230/LIPICS.ITCS.2025.92},
  bibsource    = {dblp computer science bibliography, https://dblp.org}
}

@article{DBLP:journals/siamdm/YusterZ97,
  author       = {Raphael Yuster and
                  Uri Zwick},
  title        = {Finding Even Cycles Even Faster},
  journal      = {{SIAM} J. Discret. Math.},
  volume       = {10},
  number       = {2},
  pages        = {209--222},
  year         = {1997},
  url          = {https://doi.org/10.1137/S0895480194274133},
  doi          = {10.1137/S0895480194274133},
  bibsource    = {dblp computer science bibliography, https://dblp.org}
}

@article{AYZ97,
  author       = {Noga Alon and
                  Raphael Yuster and
                  Uri Zwick},
  title        = {Finding and Counting Given Length Cycles},
  journal      = {Algorithmica},
  volume       = {17},
  number       = {3},
  pages        = {209--223},
  year         = {1997},
  url          = {https://doi.org/10.1007/BF02523189},
  doi          = {10.1007/BF02523189},
  bibsource    = {dblp computer science bibliography, https://dblp.org}
}

@article {MR2104047,
    AUTHOR = {Friedgut, Ehud},
     TITLE = {Hypergraphs, entropy, and inequalities},
   JOURNAL = {Amer. Math. Monthly},
  FJOURNAL = {American Mathematical Monthly},
    VOLUME = {111},
      YEAR = {2004},
    NUMBER = {9},
     PAGES = {749--760},
      ISSN = {0002-9890},
   MRCLASS = {94A17 (01A05 26D15)},
  MRNUMBER = {2104047},
       DOI = {10.2307/4145187},
       URL = {https://doi-org.gate.lib.buffalo.edu/10.2307/4145187},
}

@inproceedings{DBLP:conf/stoc/Itai77,
  author       = {Alon Itai and  Michael Rodeh},
  editor       = {John E. Hopcroft and
                  Emily P. Friedman and
                  Michael A. Harrison},
  title        = {Finding a Minimum Circuit in a Graph},
  booktitle    = {Proceedings of the 9th Annual {ACM} Symposium on Theory of Computing,
                  May 4-6, 1977, Boulder, Colorado, {USA}},
  pages        = {1--10},
  publisher    = {{ACM}},
  year         = {1977},
  url          = {https://doi.org/10.1145/800105.803390},
  doi          = {10.1145/800105.803390},
  bibsource    = {dblp computer science bibliography, https://dblp.org}
}

@article{MR1639767,
    AUTHOR = {Friedgut, Ehud and Kahn, Jeff},
     TITLE = {On the number of copies of one hypergraph in another},
   JOURNAL = {Israel J. Math.},
  FJOURNAL = {Israel Journal of Mathematics},
    VOLUME = {105},
      YEAR = {1998},
     PAGES = {251--256},
      ISSN = {0021-2172},
     CODEN = {ISJMAP},
   MRCLASS = {05C65},
  MRNUMBER = {1639767 (99e:05092)},
MRREVIEWER = {Nigel Martin},
       DOI = {10.1007/BF02780332},
       URL = {http://dx.doi.org.gate.lib.buffalo.edu/10.1007/BF02780332},
}

@article{MR859293,
    AUTHOR = {Chung, F. R. K. and Graham, R. L. and Frankl, P. and Shearer,
              J. B.},
     TITLE = {Some intersection theorems for ordered sets and graphs},
   JOURNAL = {J. Combin. Theory Ser. A},
  FJOURNAL = {Journal of Combinatorial Theory. Series A},
    VOLUME = {43},
      YEAR = {1986},
    NUMBER = {1},
     PAGES = {23--37},
      ISSN = {0097-3165},
     CODEN = {JCBTA7},
   MRCLASS = {05A05 (05C65)},
  MRNUMBER = {859293 (87k:05002)},
MRREVIEWER = {Zolt{\'a}n F{\"u}redi},
       DOI = {10.1016/0097-3165(86)90019-1},
       URL = {http://dx.doi.org.gate.lib.buffalo.edu/10.1016/0097-3165(86)90019-1},
}

@article{MR599482,
    AUTHOR = {Alon, Noga},
     TITLE = {On the number of subgraphs of prescribed type of graphs with a
              given number of edges},
   JOURNAL = {Israel J. Math.},
  FJOURNAL = {Israel Journal of Mathematics},
    VOLUME = {38},
      YEAR = {1981},
    NUMBER = {1-2},
     PAGES = {116--130},
      ISSN = {0021-2172},
     CODEN = {ISJMAP},
   MRCLASS = {05C35},
  MRNUMBER = {599482 (82b:05078)},
MRREVIEWER = {David E. Daykin},
       DOI = {10.1007/BF02761855},
       URL = {http://dx.doi.org.gate.lib.buffalo.edu/10.1007/BF02761855},
}

@inproceedings{DBLP:conf/pods/NgoPRR12,
  author       = {Hung Q. Ngo and
                  Ely Porat and
                  Christopher R{\'{e}} and
                  Atri Rudra},
  editor       = {Michael Benedikt and
                  Markus Kr{\"{o}}tzsch and
                  Maurizio Lenzerini},
  title        = {Worst-case optimal join algorithms: [extended abstract]},
  booktitle    = {Proceedings of the 31st {ACM} {SIGMOD-SIGACT-SIGART} Symposium on
                  Principles of Database Systems, {PODS} 2012, Scottsdale, AZ, USA,
                  May 20-24, 2012},
  pages        = {37--48},
  publisher    = {{ACM}},
  year         = {2012},
  url          = {https://doi.org/10.1145/2213556.2213565},
  doi          = {10.1145/2213556.2213565},
  bibsource    = {dblp computer science bibliography, https://dblp.org}
}

@inproceedings{DBLP:conf/vldb/Yannakakis81,
  author    = {Mihalis Yannakakis},
  title     = {Algorithms for Acyclic Database Schemes},
  booktitle = {Very Large Data Bases, 7th International Conference, September 9-11,
               1981, Cannes, France, Proceedings},
  pages     = {82--94},
  publisher = {{IEEE} Computer Society},
  year      = {1981},
  bibsource = {dblp computer science bibliography, https://dblp.org}
}

@inproceedings{DBLP:conf/soda/GroheM06,
  author    = {Martin Grohe and
               D{\'{a}}niel Marx},
  title     = {Constraint solving via fractional edge covers},
  booktitle = {Proceedings of the Seventeenth Annual {ACM-SIAM} Symposium on Discrete
               Algorithms, {SODA} 2006, Miami, Florida, USA, January 22-26, 2006},
  pages     = {289--298},
  publisher = {{ACM} Press},
  year      = {2006},
  url       = {http://dl.acm.org/citation.cfm?id=1109557.1109590},
  bibsource = {dblp computer science bibliography, https://dblp.org}
}

@article{DBLP:journals/siamcomp/AtseriasGM13,
  author    = {Albert Atserias and
               Martin Grohe and
               D{\'{a}}niel Marx},
  title     = {Size Bounds and Query Plans for Relational Joins},
  journal   = {{SIAM} J. Comput.},
  volume    = {42},
  number    = {4},
  pages     = {1737--1767},
  year      = {2013},
  url       = {https://doi.org/10.1137/110859440},
  doi       = {10.1137/110859440},
  bibsource = {dblp computer science bibliography, https://dblp.org}
}

@article{DBLP:journals/sigmod/NgoRR13,
  author       = {Hung Q. Ngo and
                  Christopher R{\'{e}} and
                  Atri Rudra},
  title        = {Skew strikes back: new developments in the theory of join algorithms},
  journal      = {{SIGMOD} Rec.},
  volume       = {42},
  number       = {4},
  pages        = {5--16},
  year         = {2013},
  url          = {https://doi.org/10.1145/2590989.2590991},
  doi          = {10.1145/2590989.2590991},
  bibsource    = {dblp computer science bibliography, https://dblp.org}
}

@inproceedings{DBLP:conf/icdt/Veldhuizen14,
  author       = {Todd L. Veldhuizen},
  editor       = {Nicole Schweikardt and
                  Vassilis Christophides and
                  Vincent Leroy},
  title        = {Triejoin: {A} Simple, Worst-Case Optimal Join Algorithm},
  booktitle    = {Proc. 17th International Conference on Database Theory (ICDT), Athens,
                  Greece, March 24-28, 2014},
  pages        = {96--106},
  publisher    = {OpenProceedings.org},
  year         = {2014},
  url          = {https://doi.org/10.5441/002/icdt.2014.13},
  doi          = {10.5441/002/icdt.2014.13},
  bibsource    = {dblp computer science bibliography, https://dblp.org}
}

@article{DBLP:journals/jacm/Marx13,
  author       = {D{\'{a}}niel Marx},
  title        = {{Tractable Hypergraph Properties for Constraint Satisfaction and Conjunctive
                  Queries}},
  journal      = {J. {ACM}},
  volume       = {60},
  number       = {6},
  pages        = {42:1--42:51},
  year         = {2013},
  url          = {https://doi.org/10.1145/2535926},
  doi          = {10.1145/2535926},
  bibsource    = {dblp computer science bibliography, https://dblp.org}
}

@inproceedings{10.1145/3196959.3196990,
author = {Ngo, Hung Q.},
title = {Worst-Case Optimal Join Algorithms: Techniques, Results, and Open Problems},
year = {2018},
isbn = {9781450347068},
publisher = {Association for Computing Machinery},
address = {New York, NY, USA},
url = {https://doi.org/10.1145/3196959.3196990},
doi = {10.1145/3196959.3196990},
booktitle = {Proceedings of the 37th ACM SIGMOD-SIGACT-SIGAI Symposium on Principles of Database Systems},
pages = {111–124},
numpages = {14},
location = {Houston, TX, USA},
series = {PODS '18}
}

\end{document}